\newtheorem{theorem}{Theorem}
\newtheorem{lemma}[theorem]{Lemma}
\newtheorem{corollary}[theorem]{Corollary}
\renewcommand\normalsize{%
   \@setfontsize\normalsize\@xpt\@xiipt
   \abovedisplayskip 1\p@ \@plus2\p@ \@minus5\p@
   \abovedisplayshortskip \z@ \@plus3\p@
   \belowdisplayshortskip 1\p@ \@plus3\p@ \@minus3\p@
   \belowdisplayskip \abovedisplayskip
   \let\@listi\@listI}
\begin{document}
%



\title{Secure Communication through Wireless-Powered Friendly Jamming: Jointly Online Optimization  over Geography, Energy  and Time}


\author{\normalsize Pan Zhou, \emph{Member, IEEE}
\thanks{Pan Zhou is  from the School
of Electronic Information and Communications, Huazhong University of
Science and Technology, Wuhan 430074, China. E-mail: zhoupannewton@gmail.com
} \\}



\maketitle
\thispagestyle{empty}
\pagestyle{plain}

\begin{abstract}
Exploring the interference-emitting friendly
 jammers to protect the sensitive communications  in the presence of eavesdroppers has increasingly
    being investigated in literature.  	In parallel, scavenging energy from abient radio signals for energy-constrained devices, namely \emph{wireless energy harvesting} (WEH),    has also drawn significant attention. Without relying on external energy supply, the wireless-powered friendly jammer by WEH from
     legitimate wireless devices is an effective approach to prolong their lifetime and gain the flexibility in deployments.
     This paper studies the online optimization of the  placement and WEH of a set of friendly jammers in a geographic location with the
     energy-efficiency (EE) consideration. We adopt a simple ``time switching" protocol where power transfer and jammer-assisted secure communications occur in different time blocks when WEH requests are launched.  Our scheme has the
     following important advantages: 1) The proposed online jammers placement and interfering power
     allocation to attack eavesdroppers is the first distributed  and scalable solutions within any specified geographic region; 2) We model the WEH for jammers as a \emph{JAM-NET lifetime maximization problem}, where online
 scheduling algorithms with heterogeneous energy demands of each jammer  (from energy sources) are designed; 3) Under our model, the
    problem of placing a minimum number of jammers with distance-based power assignments is NP-hard, and
  near optimal PTAS approximation algorithms are provided;  4) When durations of the eavesdropping and legitimate communicating are
  available and the scenario is extended to the multi-channels setting, our results are strengthened to  see further improved
EE and reduced number of jammers. Simulations back up our theory.
\end{abstract}

\begin{keywords}
Wireless power transfer,  energy harvesting, friendly jamming, security, energy efficiency, online learning.
\end{keywords}

%
\IEEEpeerreviewmaketitle


\section{Introduction}
\subsection{Background and Motivation}
Due to the inapplicability and high computational complexity of cryptography in many dynamic wireless environments, physical
layer security techniques \cite{PHY1,PHY2} for securing the transfer of highly sensitive information in wireless communications have attracted significant attention in the past decades.
 Systems such as mobile personal healthcare records \cite{EXMp1},
  contactless payment cards  \cite{EXMp2}, telemedicine systems using wireless networks \cite{EXMp3} and military
 sensor networks \cite{EXMp4} all employ wireless technologies to transmit potentially sensitive information.  In particular, placing  jammers as cooperative communication nodes has recently been explored as an effective means to achieve the secure wireless communications from eavesdroppers \cite{ED1, ED2}.
 By exploiting the shared nature of wireless channels, the successful deployments of  jammers for security must
achieve the twin goals that i) reducing the Signal-to-Interference-plus-Noise Ratio (SINR) of eavesdroppers to a level that far below
 a threshold for successful reception, and ii) maintaining the sufficient channel qualities such that the SINR at the legitimate receivers are
not reduced too much so as to prevent the reception of \emph{wireless information transfer} (WIT). However, this is often realized at the expense of additional
power consumption for friendly jammers.

Conventional energy harvesting methods rely on various renewable energy sources
in the environments, such as solar, wind, vibration and thermoelectric, that are usually unstable and uncontrollable. In contrast, the
recent advance in radio frequency (RF) enabled \emph{wireless power transfer} (WPT) technology provides an attractive solution by powering
wireless nodes with continuous and stable energy over the air \cite{WEHM_0}. The key idea of this technology is by
leveraging the far-field radiative properties of electromagnetic  wave (EMW), the wireless nodes could capture EMW remotely from RF signals and convert it
into direct current to charge its battery.

 Recently, the WPT has attracted great interests  in the
research community on energy constrained wireless networks. In \cite{E_TR_1,E_TR_2,E_TR_3,E_TR_4}, the authors studied the
sources simultaneously performing the WPT and WIT to destinations  and problem that how the wireless nodes makes use of the
harvested energy from WPT to enable communications. Motivated by these works, the process of WET can be
fully controlled, hence it is preferred to be applied in wireless networks with critical quality-of-service requirements, such as
 secure wireless communications.  In \cite{EJ_TR_1, EJ_TR_2, EJ_TR_3}, the authors considered secure communications with the existence
of a single information receiver and several wireless energy-harvesting eavesdroppers. In \cite{NOUSEEJ_TR_0}, the authors presented
the coexistence of three  types of destination in a simple wireless communication scenario: an information
receiver,  an eavesdropper and a  harvesting wireless energy receiver. As noticed, all these works \cite{EJ_TR_1, EJ_TR_2, EJ_TR_3, NOUSEEJ_TR_0} on
only focus on the process of energy-harvesting, the use of which at the receivers (e.g., friendly jammers) for  secure communications
is not studied. Recently, work \cite{FJ_TWC_16} used the harvested power at a friendly jammer as a useful resource
to emit constructive interference to attack the eavesdroppers that secures the legitimate wireless communication link for the first time. However, their
focus is only on a single communication link, where the placement of multiple friendly jammers in a geographical locations and and the  energy efficiency
 (EE) issue in the power management  for general wireless networks applications are not studied yet.


Similar to  \cite{Geo1, Geo2, Geo3}, we consider the following typical scenario that the legitimate communication  is often conducted in some restricted geographic locations, where jammers placed in the vicinity are used to secure the legitimate communication.     Different from  offline (centralized) solution
\cite{Geo1, Geo2, Geo3} that the transmission power and number of jammers are required to be optimized separately over a
known geographic region, we do not restrict the scale and geometry of the geographic locations and emphasize the potential dynamics of
nodes within the secure communication. Hence, distributed and online jammers placement protocols are desirable. In this case, the friendly
jammers have the flexibility  to be placed randomly at any feasible locations. Therefore, their lifetime is usually constrained to the energy stored in the battery, and WEH as a
promising approach  \cite{WEH_1,WEH_2,WEH_3} is demanding to prolong their lifetime. Thus, it is highly motivating to study the EE
 by leveraging both the WEH and  the interfering power allocation processes  in the power management.


\subsection{Our Work and Contribution}
As illustrated in Fig. 1,  the legitimate
communications within a region, named as \emph{storage} $\mathfrak{S}$, surrounded by a \emph{fence} $\mathfrak{F}$,  and the jammers
are placed between the space of $\mathfrak{S}$ and $\mathfrak{F}$ to protect the legitimate communications from eavesdroppers lying outside the fence.  To this end,
the friendly jammers act as  passive security assistants of legitimate communication links. The
deployments of this low cost and simple passive jammers brings both the important advantages and challenges: on the one hand, the
WEH-based scheme without any power line connection facilitate the flexibly online  jammer deployments. Such placements are inherently local and particularly useful in distributed deployments, which is highly desirable for complex geographic areas, e.g., lofty and rugged hills, rough grounds, pot-holed city streets and architectures, etc. and large-scale network deployments;
on the other hand,  jammers should have low design cost and complexity as well as have high efficiency in
energy harvesting method to enable its functionality.
Moreover,
 the jammers are not capable to communicate with each other and can only passively report their ``remaining energy status" (as ``energy demands" from the
perspective of legitimate networks) periodically to the transmitters.   In these settings, when a request of placing a new jammer targeting on a passive eavesdropper (or potential eavesdropping position) arrives, an online
algorithm needs to decide whether to accept the request and assign a transmission power (one out of $F$ channels in the multi-channel setting) to it. Decisions
about the acceptance as well as the power and channel assignments cannot be revoked later.

To solve the above  secure communications problem, we propose to use a set of wireless-powered friendly jammers as a
defensive and constructive interference-emitting companions, where jammers harvest energy via WPT
from the legitimate source nodes. The energy harvesting circuit of the jammers (e.g., consisting of a passive
low-pass filter and diode(s) \cite{ZhouR13}) is very simple and cost effective, and such a configuration is
very easy to be controlled by the external energy sources. We use a simple ``time switching" scheme \cite{E_TR_4,Nas15} such that
there are two phases within a complete secure communication circle: namely WPT and WIT for secure communication. In the first phase, due to locations
varies over time and  CSI is not available from the passive jammers  to the energy sources and different energy demand of
each jammer, it is challenging to find the optimal energy scheduling algorithm for WPT.
%


\begin{figure}
\centering
\includegraphics[scale=.45]{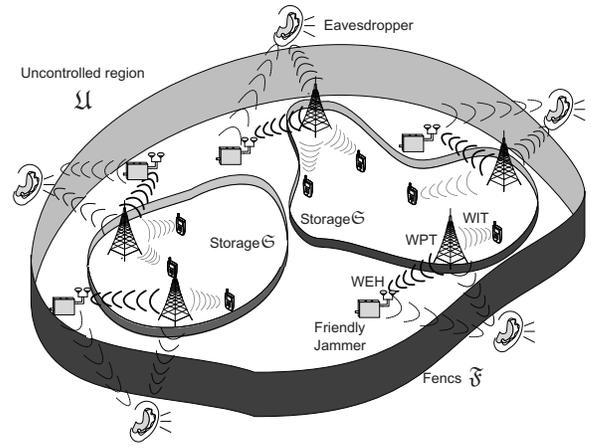}
\caption{Friendly jammers-assisted secure communications in the geographic region against
eavesdroppers outside the fence. Jammers are placed between the fence  $\mathfrak{S}$ and the storage
 $\mathfrak{F}$.  }
\label{fig:digraph}
\vspace{-.2cm}
\end{figure}

In the second phase, using
competitive analysis, we study algorithms using distance-based power assignments from each placed jammer to the
eavesdropper location(s). Accepted placement request must satisfies constraints on the SINR for both eavesdroppers and legitimate receivers. The
aim is to minimize the number of accepted placement requests of jammers.
We first focus on the case of a single
channel for the request sets with spatial lengths in $[1, \Delta]$ and average minimal duration between legitimate WIT and
eavesdropping in $[1, \Lambda]$.

The main contributions of this work are summarized below:

\emph{1)}. The novelty of the work lies in the design of the first distributed
protocol that provides secure communication in any geographically restricted communication networks using
energy-constrained friendly jammers wirelessly powered by legitimate transmitters as energy sources.

\emph{2)}. We consider the energy-efficiency (EE) during the
whole design. We adopt a time-division based protocol with during the WPT phase, where legitimate transmitters only provide quota  total energy $\bar E$ within a total WEH micro-slots of $T$. The problem is named as the \emph{JAM-NET lifetime maximization}. It is formulated by an adaptive
    constrained integer linear program (ILP) to meet the goal of EE with heterogenous energy demands.
       We analysis the originally hard problem from several aspects with practical implementation considerations
        by advanced online learning algorithms, which indicates that the optimal online scheduling algorithms are available.
         In addition, for constructive jamming power assignment, we studied the linear
power assignment based on the previous \emph{distance-based} power assignment policy, which has the advantage of being energy-minimal.

\emph{3)}. The friendly jammers only need to know minimal
information about the communication taking place. They are proactive rather than reactive, requiring no overhead for the
legitimate communication nodes and no synchronization amongst themselves. Our protocol supports dynamic behaviors, e.g.,
mobility, eavesdropping (communicating) completion or addition/removal of nodes, as along the secure communication are restricted
to the storage. However, our proposed  protocol is adaptive to the situations such information is available, e.g.,  exact positions  and frequency of both legitimate communications and
eavesdropping behaviors,
 and foreseen further EE improvements and reduced number of jammers.

\emph{4)}. We indicate that it is NP-hard to minimize the number of placed jammers necessary to protect the
geographic domain of secure communications.
 \begin{itemize}
   \item  We derive, for any fixed $\varepsilon$, the upper bound of $O((1+\varepsilon) {\Delta ^{d/2}})$ and $O((1+\varepsilon) \Lambda {\Delta ^{d/2}})$ on the
competitive ratio of any deterministic online algorithm without and with the knowledge of duration $\Lambda$.
   \item  Then, we extend the result
to the general polynomial power assignment with parameter $r$ that
cannot yield a competitive ratio worsen than $O((1+\varepsilon) {\Delta ^{\min\{r, 1-r\}}})$; for the
square root power assignment, it yields an upper bound of $O((1+\varepsilon) {\Delta ^{d/2}})$. In fact, we show that
 this bound holds for any distance-based power assignment.
   \item Our upper bounds reveals an exponential gap of the achievable approximation guarantees between
 deterministic online and offline \cite{Geo1, Geo2, Geo3} algorithms. The main difficulty of the online
scenario turns out to be that the request cannot be ordered by length due to distributed deployments. Given $r\in [0,1]$, we showed that the
square root $r=1/2$ achieves near optimal competitive radio among all distance-based power assignments and it superior to any
other polynomial power assignment.
\item We extended our analysis to the multi-channel cases. We generalize the analysis of \textup{MULTI-CHAN JAM-Distance} algorithm from 1 to $F$ channels. Using $F= F' \cdot F''$ channels is only $\Omega ((1 + \varepsilon )F \cdot {\Lambda ^{1/F'}} \cdot {\Delta ^{1/F''}})$-competitive. It
 indicates an exponential reduction in the competitive ratio, which indicates that the \emph{multi-channel diversity} could improve the security
 of legitimate communications.
 \end{itemize}


\subsection{Related Work} Most related work, e.g. \cite{E_TR_1}-\cite{NOUSEEJ_TR_0}, focused on the wiretap channel \cite{WT-Channel75} in the field of
information theory, in which a single eavesdropper tries to listen to legitimate communication between a pair of
 nodes. It is shown that perfect security is possible when the eavesdropper's channel quality is lower than a threshold. Recent works
\cite{Rw1,Rw2} also focus on the MIMO wiretap channel where the transmitter, receiver and eavesdropper may configured  with multiple antennas.
In \cite{Nas15}, the authors had used a wireless-powered relay to help
the point-to-point communication. In \cite{Rw3},
the authors studied the friendly jamming signal design
to help the secure communication based on the knowledge
of the uncontrollable energy harvesting process.
 Different from \cite{Nas15} \cite{Rw3}, authors in \cite{FJ_TWC_16} considered  the WEH at a friendly jammer
 to emitting constructive jamming power for a secure communication link, where the jamming power and rate parameters are optimized
for secure communication.  However, most of these works primarily
targeted to the theoretical significant due to the simple scenario under consideration but do not explore the geometry of the problem
sufficiently.

Vilela et al. \cite{GeoRw1} showed that without any assumptions on the locations of friendly jammers and eavesdroppers, jammers
 could co-transmitting with the legitimate transmitter and in the vicinity of a common destination. The authors formulated this setting
as a graph and use ILP to find an optimal subset of jamming nodes. In \cite{GeoRw2}, the authors study the
asymptotic behaviors for jammers and eavesdroppers at the  stochastically distributed locations. In particular, they proposed
the concept of \emph{Secure Throughput}, which is based on the probability that a message is successfully received only by legitimate receivers.
To our best knowledge, \cite{Geo1, Geo2, Geo3} are the only works that adapt  friendly jammers into complex geometric positioning constraints.
They provided offline optimal solutions to jammer placement problem that involves both
continuous aspects \cite{Geo2, Geo3} (i.e., power allocations) and discrete aspects \cite{Geo1} (i.e., jammers placements), but  they are necessarily to be solved separately.
Moreover, all the above works do not consider the issues of WEH and EE for friendly jammers to
prolong their lifetime, which are our main focuses.  Another important line of this work is the study
of  competitive ratio of the admitted friendly jammers with instant power allocation in the distributed setting for secure wireless communications for the first time. We note that existing related works on distributed scenarios  only studied the competitive ratio for   capacity maximization \cite{2009} and online admission control  \cite{Online13}   in classic wireless communications.

The rest of this paper is organized as follows: Section II describes our system model.
 Section III  proposes the JAM-NET lifetime maximization problem, we analyze its learning performance in
 several typical and practical implementations. Section IV focuses on the distributed online jammer
 placement and power allocation problem with competitive analysis. We extend our results to
 the requests with duration and multi-channel scenarios in Section V.   Simulation  results are presented in Section VI.
 The paper is concluded in Section VII.

\section{System Models}

\subsection{Environment Model}
We consider a Storage/Fence environment model in which legitimate communication takes place within an enclosure specified by one
 or more polygonal regions $\mathfrak{S} \subset {\mathbb{R}^3}$, called the storage. We do not assume any knowledge of the locations
of  communication links  in $\mathfrak{S}$, but we do assume some properties of legitimate communication described below. At first, the
legitimate transmitters and receivers can be located at any point $p_s \in \mathfrak{S}$. Further, there exists a \emph{controlled
region}, $\mathfrak{C} \subset {\mathbb{R}^3}$, like the band region that contains $\mathfrak{S}$ in Fig. 1, where there is
no eavesdropper able to be within the interior of $\mathfrak{C}$. The boundary $\partial\mathfrak{C}$ is referred to as the fence
$\mathfrak{F}$. Outside of $\mathfrak{C}$ is the \emph{uncontrolled region} $\mathfrak{U}$. We assume that there is no hole  in $\mathfrak{C}$, which
is a union of simply connected regions; otherwise it can be divided into different regions of storages and fences. We are necessary to
place the jammers distributively in a region $\mathfrak{A}$ called the \emph{allowable region}. The allowable region permits
us to place the jammers at potential restrictions on locations that belong to the region of $\mathfrak{C} / \mathfrak{S}$, e.g., the
guarded distance for secure commutations or locations that are easily reached for maintenance purpose.

\subsection{Wireless Power Transfer and Energy Harvesting Model}
\begin{figure}
\vspace{-.2cm}
\centering
\includegraphics[scale=.76]{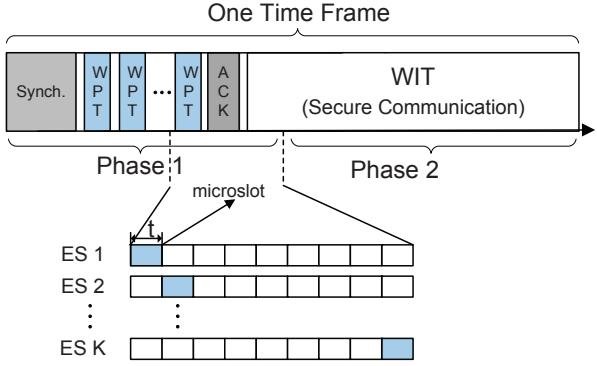}
\caption{Time frame divided into two phases. The phase one is for scheduling WPT among multiple energy sources (ESs) and the
phase two is for WIT for secure communications.}
\label{fig:digraph}
\vspace{-.5cm}
\end{figure}

We consider the WPT problem with a set of energy sources (ESs) $\mathcal{X}= \{1,2,...,k,...,K\}$ as
 legitimate  transmitters (Txs) from the legitimate communication to power a set of friendly jammers $\mathcal{A} = \{1,2,...j,...,,J\}$ with
maximal cardinality $J$ under all observed time frames. The ESs are assumed to
have no restriction to supply the energy
request to each jammer, but the served energy must be used energy-efficiently. Since ESs are geographically randomly distributed and the  stochastic placements of jammers, it is desirable to
design an online WPT scheduling algorithm (scheduler) from ESs to jammers for WEH  over time. We employ the time-switching protocol as illustrated
in Fig. 2, where a  communication frame is divided into the two phases. The phase of WPT is divided into three sub-phases:
i) perform a time synchronization among all ESs to avoid con-channel interference; ii) using the
idea of time-division multiplexing, the time block of second phase is further divided into micro slots. Online algorithms are
necessary, which schedule the WEH from a single ES $k$ a single jammer $j$ at each microslot (short as 'slot') $t$ based on the observed  CSI  and energy demands from  jammers
to ESs (detailed below); and iii) send an ACK to indicate the completion of the WPT phase. The jammers can also initiate the next frame for WEH at the end of last frame when their remaining energy-level is lower than some threshold. Hence the WIT phase is not
necessarily to be launched at each time frame.

Let the equivalent complex baseband channel from a ES $k$ to a jammer $j$ is denoted by $g^t_{k,j}(v)$, where $v$ denotes the fading
state of the CSI with the instant channel power gain $h^t_{k,j}(v)=|g^t_{k,j}(v)|^2$ at a micro slot $t$. For each
fading state $v$, the portion of signal power split to secure communication is denoted by $\alpha(v)$ with $0\le \alpha(v) \le1$, and that
to WEH as $1-\alpha(v)$. In general the $\alpha(v)$ can be adjusted over different fading state over time. For the WPT phase, the
scheduling strategies contains the following two cases, \textbf{Case \textup{I}}: $h^t_{k,j}(v)$ is perfectly known at ESs for each fading state
$v$, referred to as the \emph{known CSI at ESs}, which is a simple model but not realistic; \textbf{Case \textup{II}}: $h^t_{k,j}(v)$ is unknown at at the ESs (Txs) for all the fading state $v$, referred to as \emph{unknown CSI at ESs}, which is practical problem as the
passive jammers usually are not capable to communicate with ESs to estimate the CSI. In this case, we need to
estimate the mean values of $h^t_{k,j}(v)$ by online learning algorithms;  Given $\alpha(v)$, the harvested energy (HE) at each slot $t$ (normalized
to unitary) at each fading state $v$ can be expressed as
 \begin{IEEEeqnarray*}{l}
Q_{k,j}(v)=\xi(1-\alpha(v))h^t_{k,j}(v)p^t_{k,j}(v),
\IEEEyesnumber \label{eq:WEH1}
 \end{IEEEeqnarray*}
where $\xi$ is a constant coefficient that accounts for the loss in the energy transducer for converting the HE to electrical energy to be stored, and $p^t_{k,j}(v)$ denotes the scheduled energy from ES $k$ to $j$ at slot $t$. Denoted
each jammer has an energy demand $c_{X_{t,j}}$ at slot $t$, we have the scheduled energy request from ES $X_{t}$ at $t$ such that
 \begin{IEEEeqnarray*}{l}
p^t_{k,j}(v)=c_{X_{t,j}},  X_{t} = k.
\IEEEyesnumber \label{eq:Cost1}
 \end{IEEEeqnarray*}
  Then, the expected HE at each jammer is then given by
 \begin{IEEEeqnarray*}{l}
\mu_{k,j}=\mathbb{E}_v[Q_{k,j}(v)], \forall k \in \mathcal{X}, j \in \mathcal{J}.
\IEEEyesnumber \label{eq:WEH1}
 \end{IEEEeqnarray*}


Based on the time-division scheme, at each slot $t$ for WEH, only one ES is activated for WPT according to its distance to the boundary of storage $\mathfrak{S}$, based on which the appearance of an ES $X_t$ is normalized as an independently with identical distributed $\mathbb{P}\{X_t = k\} = \pi_k,k\in\mathcal{X}$. In this scheme, each scheduled jammer $j\in\mathcal{A}$ generates a non-negative HE (as reward)  $Y_{j,t}$. W.l.o.g., under a given ES $X_t = k$, the HE $Y_{j,t}$'s are independent random variables in $[0,1]$, where the
better channel gain $h^t_{k,j}(v)$ (e.g., schedule the nearer jammer) and the larger  energy demand provide better HE at jammer $j$. But, the conditional expectation $\mathbb{E}[Y_{j,t}|X_t=k]=u_{k,j}$ is unknown to the scheduler. Moreover, an energy demand is realized if jammer $j$ is scheduled under ES $k$. We  consider fixed and known energy demands in this paper, where the $c_{k,j}>0$ when jammer $j$ is served by ES $k$.

 We can formulate the problem as a constrained context-bandit (CMAB) \cite{Bubeck12} online learning problem. Similar to traditional contextual bandits, the ES $X_t$ as a \emph{context} is observable at the beginning $t$, while only the HE of the jammer taken by the scheduler is revealed at the end of slot $t$. Specifically, at the beginning of slot $t$, the scheduler observes the context $X_t$ and takes a jammer $A_t$ from $\{0\}\bigcup\mathcal{A}$, where ``0" represents a \emph{dummy} jammer that the scheduler  skips the current context. Let $Y_t$ and $Z_t$ be the HE  and energy demand received for the scheduler  in slot $t$, respectively. If the scheduler  takes a jammer $A_t = j>0 $, then the HE  is $Y_t=Y_{j,t}$ and the energy demand is $Z_t= c_{X_{t,j}}$. Otherwise, if the scheduler  takes the dummy jammer $A_t=0$, neither HE  nor energy demand is incurred, i.e., $Y_t=0$ and $Z_t=0$. We focus on the CMAB in this work with a known time-horizon $T$ and limited energy budget $\bar{E}$ for the goal of EE in WPT, where the process ends when the scheduler  runs out of the energy budget or at the end of time $T$.

 Formally, an online learning algorithm $\digamma$ is a function that maps the historical observations $\mathcal{H}_{t-1}=(X_1,A_1,Y;X_2,A_2,Y_2;...;X_{t-1},A_{t-1},Y_{t-1})$ and the current ES $X_t$ to a jammer $A_t\in\{0\}\bigcup\mathcal{A}$. The objective of the algorithm is to maximize the expected total HEs  $U_{\digamma}(T,\bar{E})$ for a given   $T$ and an energy budget $\bar{E}$, i.e.,
$$\textup{maximize} \ U_{\digamma}(T,\bar{E})= \mathbb{E}_{\digamma}[\sum_{t=1}^TY_t]$$
$$\textup{subject to}\  \sum_{t=1}^TZ_t \leq \bar{E},$$
where the expectation is taken over the distributions of ESs and energy rewards. Note that we consider a ``hard" energy budget constraint, i.e., the total energy demands should not be greater than $\bar{E}$ under any realization.

We measure the performance of the algorithm $\digamma$ by comparing it with the optimal one, which is the optimal algorithm with known statistics, including the knowledge of $\pi_k$'s, and $u_{k,j}$'s and $c_{k,j}$'s. Let $U^*(T,\bar{E})$ be the expected total HE for the jammers' networks (JAM-NET) obtained by an offline optimal algorithm (with hindsight full knowledge). Then, the we define the term ``\emph{regret}" of the algorithm $\Gamma$  as
$$R_{\Gamma}(T,\bar{E}) = U^*(T,\bar{E})-U_{\Gamma}(T,\bar{E}).$$
The objective of $\digamma$ is then to minimize the regret. We are interested in the asymptotic performance,  where the time-horizon 	$T$ and the energy budget $\bar{E}$ grow to infinity proportionally, i.e., with a fixed ratio $\rho= \bar{E}/T$. The coordination process of ESs to use up the total
budget $\bar{E}$ is discussed in Section III.

\subsection{Secure Communication Model}
The communication model is similar to that of \cite{Geo1, Geo2, Geo3}. W.l.o.g., we assume that the transmission power $\tilde P$ is the
same for all legitimate transmitters and  the receiving power $\bar P$ is nearly the same for all receivers within $\mathfrak{S}$. On the other hand,
the heard signal at eavesdropper suffers path loss with the path-loss exponent $\gamma$. Formally, for an eavesdropper $p_e$ listens
 to a transmitter $p_s \in \mathfrak{S}$; the received power is $\tilde P d_{p_sp_e}^{-\gamma}$, where $\gamma$ typically in range $[2,6]$ and
 $d_{pq}=||p-q||$ is the Euclidean distance between $p$ and $q$.

We assume that it is co-channel interference free among legitimate
communications and we use the \emph{Signal-to-interference Ratio} (\textup{SIR}) as physical model that only jamming
signals cause interference. Formally, for a legitimate receiver $p_s$ in $\mathfrak{S}$,
it is the ratio of the transmitted signal power to the total interference contributed by the jammers,
 \begin{IEEEeqnarray*}{l}
\bm{\textup{{SIR}}}(J,{p_s}) = \frac{{\bar P  }}{{\sum\nolimits_{j \in J} {{P_j}{d_{jp_s}^{ - \gamma }}} }},
\IEEEyesnumber \label{eq:SIR1}
 \end{IEEEeqnarray*}
where $P_j$ is the transmission power of jammer $j$. Similarly, for an eavesdropper $p_e$, the transmission signal from the
storage suffers path loss. In the case the jammer is passive, we make an observation point that the maximal signal power received
from a transmitter at the nearest location of $p_e$ on $\mathfrak{S}$ is $s(p_e)$, and we define
 \begin{IEEEeqnarray*}{l}
\bm{\textup{{SIR}}}(J,{p_e}) = \frac{{\tilde P{  d_{s({p_e}){p_e}}^{ - \gamma }}}}{{\sum\nolimits_{j \in J} {{P_j}{  d_{j{p_e}}  ^{ - \gamma }}} }}.
\IEEEyesnumber \label{eq:SIR2}
 \end{IEEEeqnarray*}
In \cite{Geo1, Geo2, Geo3}, the authors use a simple truth that the total interference at a location $p$ is usually
dominated by the interference from the nearest jammer to $p$ due to the received power decreasing
exponentially with distance. However, our model gives religious interference analysis that contributed by all transmitters.

To facilitate successful receptions of legitimate transmissions, we require that $\bm{\textup{{SIR}}}(J,{p_s}) > \delta_s$, for
all points $p_s \in \mathfrak{S}$ and some specified value of $\delta_s$. Similarly, to make the eavesdroppers unable to receive secure
messages, we require that $\bm{\textup{{SIR}}}(J,{p_e}) < \delta_e$, for points $p_e \in {\mathbb{R}^2} / \mathfrak{C}$ and some parameter
$\delta_e$. In summary, the set of friendly jammers need to satisfy the following constraints:
 \begin{IEEEeqnarray*}{l}
  \begin{array}{l}
\bm{\textup{{SIR}}}(J,{p_s}) > \delta_s, \forall p_s \in \mathfrak{S}, \IEEEyesnumber \label{eq:Cst1} \\
\bm{\textup{{SIR}}}(J,{p_e}) < \delta_e, \forall p_e \in \mathbb{R}^2 / \mathfrak{C},
 \end{array}\IEEEyesnumber \label{eq:Cst2}
 \end{IEEEeqnarray*}
Finally, as indicated in \cite{Geo2}, jamming the eavesdroppers at the fence $\mathfrak{F}$ is sufficient to
ensure that eavesdroppers located outside the fence are also jammed successfully; ensuring the
legitimate receiver on the boundary of $\mathfrak{F}$ is not jammed is sufficient to guarantee that receivers inside
 $\mathfrak{F}$  are not jammed.

\section{JAM-NET lifetime maximization: Constrained Contextual Bandits with Heterogeneous Costs}
We name the energy harvest problem of JAM-NET as the \emph{JAM-NET lifetime maximization} problem. We discuss the design of algorithms in heterogeneous-energy-demand systems where energy demand $c_{k,j}$ depends on $k$ and $j$. We summary the main results here first and then go into details. 

To schedule the WPT from multiple ESs to jammers, we show that significant complexity incurs when making decisions under those
"medium-quality" ESs other the best and worst ESs. Because the scheduler needs to balance between the instantaneous expected harvest
energy rewards and the future rewards, which is very difficult due the coupling effect introduced by the time and energy budgets constraints. Thus,
we resort to approximations by relaxing the hard time and energy budget constraints to an average energy  budget constraint. Now, the
problem becomes to maximize the expected rewards (of total harvest energy) with average budget constraint $\bar E/T$. However, the problem
of using the fixed average budget $\bar E/T$ and not taking the remaining time $\tau$ and remaining energy  budget $b_\tau$ into account would not
explore the dynamic structure of the problem, which will lead to suboptimal solutions. Hence, we propose an Adaptive Integer Linear Program (AILP) that replaces  the fixed average budget $\bar E/T$ by the \emph{remaining average budget}, i.e., $b_\tau/\tau$. We indicate that the
performance analysis is non-trivial for the AILP, although the intuition behind the formulation is quite natural. Using concentration inequalities, we can
show that $b_\tau/\tau$ under AILP concentrates near the average budget $\bar E/T$ with high probability. This proves that the proposed AILP algorithms
achieves an expected total HE only within a constant to the optimum expect for certain boundary cases.

Next, gaining the insight from the case of the known statistics of the expected HE for each jammer, we extend our
analysis to that the expected HE as rewards for jammers are unknown.  During the analysis, we find
that the AILP algorithm only require the ranking of the expected rewards rather than their actual values. Inspired by this, we combine
the famous upper-confidence-bound (UCB) algorithm \cite{Finite02} and show that,  for the general heterogeneous-energy-demand systems, our
proposed $\epsilon$-ESs-JamNet-UCB-AILP algorithm achieves $O(log(t))$ regret that is learning-rate optimal except for certain boundary
case, where it achieves $O(\sqrt{t})$ regret.


\subsection{Approximation of the Offline Optimal Algorithm: Known CSI at ESs}
In this subsection, we first study the case with known statistics. We consider the case where the energy demand for each jammer$k$ under context $j$ is fixed at $c_{k,j}$, which may be different for different $k$ and $j$.

With known statistics, the scheduler  knows the context (ESs) distribution $\pi_k$'s, the energy demands $c_{k,j}$'s, and the expected rewards of HEs $u_{k,j}$'s.  With heterogeneous energy demands, the quality of the WEH process a jammer $j$ under a context $k$ is roughly captured by its \emph{normalized expected HE}, defined as $\eta_{k,j}=u_{k,j}/c_{k,j}$. However, the scheduler  cannot only focus on the ``best" jammer to serve, i.e., $j_k^*=argmax_{j\in\mathcal{A}}\eta_{k,j}$, for context ES $k$. This is because there may exist another jammer$j'$ such that $\eta_{k,j'}<\eta_{k,j_{k}^*}$, but $u_{k,j'}>u_{k,j_{k}^*}$ (and surely, $c_{k,j'}>c_{k,j_{k}^*}$). If the energy budget allocated to each ES $k$ out of the total is sufficient,
then the scheduler  may take jammer $j'$ to maximize the expected reward of HE. Therefore, the ALP algorithm in this case needs to decide the probability to take jammer $j$ under ES $k$, by solving an ILP problem with an additional constraint that only on jammer can be taken under each ES.   We can show that ALP achieves $O(1)$ regret in non-boundary cases, and $O(\sqrt{T})$ regret in boundary cases. We note that the regret analysis of ALP in this case is much more difficult due to the additional constraint that couples all jammers under each ES.

In this case, the scheduler  needs to consider all
jammers under each ES. Let $p_{k,j}$ be the probability that jammer$j$ is taken under ES $k$.
We define the following ILP problem:
$$
(\mathcal{LP}_{T, \bar{E}}')\quad \textup{maximize}  \quad\sum_{k=1}^K\pi_k\sum_{j=1}^Jp_{k,j}u_{k,j},\eqno{(7)}
$$\IEEEyesnumber
\quad subject to:
\begin{subequations}\label{eq:CstILP1}
\begin{numcases}{}
\sum_{k=1}^K\pi_k\sum_{j=1}^Jp_{k,j}c_{k,j}\leq \bar{E}/T,\label{test 1}\\
\sum_{j=1}^Jp_{k,j}\leq 1, \forall k,\label{test 2}\\
p_{k,j}\in[0,1].
\end{numcases}
\end{subequations}
\indent The above LP problem $\mathcal{LP}_{T,\bar{E}}'$ can be solved efficiently by optimization tools. Let $\hat{v}(p)$ be the maximum value of  $\mathcal{LP}_{T,\bar{E}}'$. Similar to Lemma 1, we can show that $T\hat{v}(p)$ is an upper bound of the expected total HE, i.e., $T\hat{v}(p)\leq U^*(T,\bar{E})$.\\
To obtain insight from the solution of $\mathcal{LP}_{T,\bar{E}}'$, we derive an explicit representation for the solution by analyzing the structure of $\mathcal{LP}_{T,\bar{E}}'$. Note that there are two types of (non-trivial) constraints in $\mathcal{LP}_{T,\bar{E}}'$, one is the ``inter-ES" energy budget constraint (\ref{test 1}), the other is the ``intra-ES" constraint (\ref{test 2}). These constraints can be decoupled by first allocating energy budget for each context, and then solving a subproblem with the allocated energy budget constraint for each ES. Specifically, let $\rho_k$ be the energy budget allocated to ES $k$, then $\mathcal{LP}_{T,\bar{E}}'$ can be decomposed as follows:
$$\textup{maximize}\quad\sum_{k=1}^K\pi_k\hat{v_k}(\rho_k),$$
$$\quad \ \textup{subject to: } \sum_{k=1}^K\pi_k\rho_k\leq \bar{E}/T$$
where
\begin{subequations}
\begin{numcases}{}
(\mathcal{SP}_k) \ v_k(\rho_k)= \textup{maximize} \ \sum_{j=1}^Jp_{k,j},\\
\textup{subject to}\quad\sum_{j=1}^Jp_{k,j}c_{k,j}\leq\rho_k,\\
\quad\quad\quad\quad\quad\sum_{j=1}^Jp_{k,j}\leq 1,p_{k,j}\in[0,1].
\end{numcases}
\end{subequations}
\indent Next, by analyzing sub-problem $\mathcal{SP}_k$, we show that some jammers can be deleted without affecting the performance, i.e., the probability is 0 in the optimal solution.

\begin{lemma}\label{lemma11}
For any given $\rho_k\geq 0$, there exists an optimal solution of $\mathcal{SP}_k$, i.e., $\textbf{p}_k^*=(p_{k,1}^*, p_{k,2}^*,..., p_{k,J}^*)$, satifies:
\begin{enumerate}[(1)]
\item For $j_1$, if there exists another jammer $j_2$, such that $\eta_{k,j_1}\leq\eta_{j_2}$ and $u_{j_1}\leq u_{k,j_2}$, then $p_{k,j_1}^*=0$;
\item For $k_1$, if there exists two jammers $j_2$ and $j_3$, such that $\eta_{k,j_2}\leq\eta_{k,j_1}\leq\eta_{k,j_3}$, $u_{k,j_2}\leq u_{k,j_1}\leq u_{k,j_3}$, and $\frac{u_{k,j_1}-u_{k,j_3}}{c_{k,j_1}-c_{k,j_3}}\leq\frac{u_{k,j_2}-u_{k,j_3}}{c_{k,j_2}-c_{k,j_3}}$, then  $p_{k,j_1}^*=0$.
\end{enumerate}
\end{lemma}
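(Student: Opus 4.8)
The plan is to treat the subproblem $\mathcal{SP}_k$ as a two-resource linear program — the per-context energy budget $\rho_k$ and the unit total-probability (``mass'') budget $\sum_j p_{k,j}\le 1$ — whose variables $p_{k,j}$ select, fractionally, among the cost/reward pairs $(c_{k,j},u_{k,j})$. The central fact I would establish first is geometric: viewing each jammer $j$ as the planar point $(c_{k,j},u_{k,j})$, any optimum of $\mathcal{SP}_k$ is supported on the upper boundary of the convex hull of these points (together with the origin). This follows from LP duality: assigning multipliers $\lambda\ge 0$ to the budget constraint and $\nu\ge 0$ to the mass constraint, complementary slackness forces every jammer with $p_{k,j}>0$ to lie on a supporting line $u=\lambda c+\nu$ that dominates all points from above, while jammers strictly below this line receive zero weight. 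Parts (1) and (2) then amount to showing that the hypotheses single out jammers that can never be hull vertices, i.e.\ that lie on or below such a supporting line, so that an optimal solution avoiding them exists.

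For part (1) I would argue by a direct exchange. Suppose an optimal solution $\mathbf{p}^*_k$ has $p_{k,j_1}^*>0$ while jammer $j_2$ satisfies the stated relations $\eta_{k,j_1}\le\eta_{k,j_2}$ and $u_{k,j_1}\le u_{k,j_2}$. Reassign the reward currently produced by $j_1$ to $j_2$, using $p'=p_{k,j_1}^*\,u_{k,j_1}/u_{k,j_2}$ units of $j_2$ (the case $u_{k,j_2}=0$ being trivial). Because $u_{k,j_2}\ge u_{k,j_1}$ we have $p'\le p_{k,j_1}^*$, so the mass used does not increase; and since $\eta_{k,j_1}\le\eta_{k,j_2}$ is exactly $u_{k,j_1}c_{k,j_2}\le c_{k,j_1}u_{k,j_2}$, the cost $p'c_{k,j_2}$ does not exceed $p_{k,j_1}^*c_{k,j_1}$. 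The objective is unchanged, feasibility is preserved (both resources only decrease, and $p_{k,j_2}$ stays in $[0,1]$ because the total mass does not grow), so the modified solution is again optimal with $p_{k,j_1}^*=0$.

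For part (2) the same idea applies, but the reward of $j_1$ is transferred to a convex combination of $j_2$ and $j_3$. I would choose weights $a,b\ge0$ matching $j_1$ in both cost and mass, namely $a+b=p_{k,j_1}^*$ and $a c_{k,j_2}+b c_{k,j_3}=p_{k,j_1}^*c_{k,j_1}$; the sign pattern of the hypotheses places $c_{k,j_1}$ between $c_{k,j_2}$ and $c_{k,j_3}$, making $a,b$ nonnegative. The transferred reward $a u_{k,j_2}+b u_{k,j_3}$ then equals $p_{k,j_1}^*$ times the height, at abscissa $c_{k,j_1}$, of the chord through $(c_{k,j_2},u_{k,j_2})$ and $(c_{k,j_3},u_{k,j_3})$. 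The slope inequality in the hypothesis is the algebraic form of the assertion that $(c_{k,j_1},u_{k,j_1})$ does not lie above this chord, so the transferred reward is at least $p_{k,j_1}^*u_{k,j_1}$; the modified solution is thus feasible, no worse in objective, and sets $p_{k,j_1}^*=0$.

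The main obstacle I anticipate is twofold. First, the bookkeeping that converts the three stated inequalities of part (2) into the clean ``on-or-below-the-chord'' condition must be done carefully, since the denominators $c_{k,j_1}-c_{k,j_3}$ and $c_{k,j_2}-c_{k,j_3}$ are negative under the relevant ordering and cross-multiplying flips the inequality; it is here that the combined use of the $\eta$-ordering and the $u$-ordering is needed to pin down the cost ordering and hence the signs. Second, because the LP can be degenerate, $j_1$ may lie exactly on the supporting line, so strict complementary slackness fails and one cannot simply invoke ``$p_{k,j_1}^*=0$ at every optimum''; the exchange arguments above are precisely what exhibit the particular optimum claimed by the lemma. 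A minor point to verify throughout is that each reassignment respects $p_{k,j}\in[0,1]$ and $\sum_j p_{k,j}\le 1$, which holds because every exchange only redistributes mass without increasing the total.
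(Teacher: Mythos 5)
Your proposal is correct and follows essentially the same route as the paper: both parts are proved by a constructive exchange that zeroes out $p_{k,j_1}$ and reallocates its contribution to $j_2$ (part (1)) or to a combination of $j_2$ and $j_3$ (part (2)) while checking the mass and budget constraints, the only cosmetic difference being that you match the reward (part (1)) or both resources exactly (part (2)) where the paper transfers the consumed budget and prefers $j_3$ until the mass constraint binds. Your explicit attention to the sign of the denominators $c_{k,j_1}-c_{k,j_3}$ and $c_{k,j_2}-c_{k,j_3}$ in the chord condition is exactly the point the paper glosses over with ``we can verify.''
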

Intuitively, the first part of Lemma \ref{lemma11} shows that if a jammer has small normalized and original expected HE, then it can be removed. The second part of Lemma \ref{lemma11} shows that if a jammer has small normalized expected HE and medium original expected HE, but the increasing rate is smaller than another jammer with larger expected HE, then it can also be removed.

\begin{proof}
The key idea of this proof is that, if the conditions is satisfied, and there is a feasible solution $\textbf{p}_k = (p_{k,1},p_{k,2},...,p_{k,J})$ such that $p_{k,j_1}>0$, then we can construct another feasible solution $\textbf{p}'_k$ such that $p'_{k,j_1}=0$, without reducing the objective value $v_k(\rho_k)$.\\
\indent We first prove part (1). Under the conditions of part (1), if  $\textbf{p}_k$ is a feasible solution of $\mathcal{SP}_k$ with $p_{k,j_1}>0$, then consider another solution $\textbf{p}'_k$, where $p'_{k,j}=p_{k,j}$ for $j\notin\{j_1,j_2\}$, $p'_{k,j}= 0$, and $p'_{k,j_2}=p_{k,j_2}+p_{k,j_1} \textup{min}\{\frac{c_{k,j_1}}{c_k,j_2},1\}$.  Then, we can verify that $\textbf{p}'_k$ is a feasible solution of ($\mathcal{SP}_k$), and the objective value under $\textbf{p}'_k$ is no less than that under $\textbf{p}_k$.

For the second part, if the conditions are satisfied and $p_{k,j_1}>0$, then we construct a new solution $\textbf{p}'_k$ by re-allocating the energy budget consumed by jammer $j_1$ to jammer $j_2$ and $j_3$, without violating the constraints. Specifically, we set the probability the same as the original solution for other jammers, i.e., $p'_{k,j} = p_{k,j}$ for $j\notin\{j_1,j_2,j_3\}$, and set $p'_{k,j_1} = 0$ for jammer $j_1$. For $j_2$ and $j_3$, to maximize the objective function, we would like to allocate as much energy budget as possible to $j_3$ unless there is remaining energy budget. Therefore, we set $p'_{k,j_2} = p_{k,j_2}$ and $p'_{k,j_3} = p_{k,j_3} + \frac{p_{k,j_1}c_{k,j_1}}{c_{k,j_3}}$, if $\sum_{j\neq j_1} p_{k,j} + \frac{p_{k,j_1}c_{k,j_1}}{c_{k,j_3}} \leq 1$; or, $p'_{k,j_2} = p_{k,j_2} + \frac{p_{k,j_1}c_{k,j_1}-(1-\sum_{j\neq j_1}p_{k,j})c_{k,j_3}}{c_{k,j_2}-c_{k,j_3}}$ and $p'_{k,j_3} = p_{k,j_3} + \frac{(1-\sum_{j\neq j_1}p_{k,j})c_{k,j_2}-p_{k,j_1}c_{k,j_1}}{c_{k,j_2}-c_{k,j_3}}$, if $\sum_{j\neq j_1} p_{k,j} +\frac{p_{k,j_1}c_{k,j_1}}{c_{k,j_3}}>1$. We can verify that $\textbf{p}_k$ satisfies the constraints of ($\mathcal{SP}_k$) but the objective value is no less than that under $\textbf{p}_k$
\end{proof}
With Lemma \ref{lemma11}, the scheduler  can ignore some jammers that will obviously be allocated with zero probability under a given context $k$. We call the set of the remaining jammers as \emph{candidate set} for context $k$, denoted as $\mathcal{A}_k$. We propose an algorithm to construct the candidate jammer set for context $k$, as shown in Algorithm \ref{algorithm 4}.

\begin{algorithm}[h]
\caption{Find Candidate Jammer Set for ES $k$}\label{algorithm 4}
\begin{algorithmic}[1]
\REQUIRE $c_{k,j}$'s, $u_{k,j}$'s, for all $1 \leq j \leq J$;\
\ENSURE $\mathcal{A}_k$;\
\STATE Calculate normalized HE as rewards: $\eta_{k,j} = u_{k,j}/c_{k,j}$;\
\STATE Sort jammers in descending order of their normalized rewards of HEs:$$\eta_{k,j_1}\geq\eta_{k,j_2}\geq...\geq\eta_{k,j_J}.$$
\FOR {$a=2$ \TO $J$}
\IF{$\exists a'<a$ such that $u_{k,j_a}\leq u_{k,j_{a'}}$}
\STATE $\mathcal{A}_k=\mathcal{A}_k/\{j_a\}$;
\ENDIF
\ENDFOR
\STATE $a=1$;
\WHILE{$a\leq J-1$}
\STATE  Find the jammer with highest increasing rate: $$a^*=\textup{argmax}_{a':a'>a,j_{a'}\in\mathcal{A}_k} \frac{u_{k,j_{a'}}-u_{k,j_a}}{c_{k,j_{a'}}-c_{k,j_a}}.$$ \
\STATE Remove the jammers in between:$$\mathcal{A}_k=\mathcal{A}_k/\{j_{a'}:a<a'<a^*\}.$$\
\STATE Move to the next candidate jammer: $a=a^*$;
\ENDWHILE
\end{algorithmic}
\end{algorithm}

 For ES $k$, assume that the candidate jammer set $\mathcal{A}_k = \{j_{k,1}, j_{k,2},..., j_{k,J_k}\}$ has been sorted in descending order of their normalized energy rewards, i.e., $\eta_{k,j_{k,1}}\leq\eta_{k,j_{k,2}}\leq...\leq\eta_{k,j_{k,J_k}}$. From Algorithm \ref{algorithm 4}, we know that $ u_{k,j_{k,1}}<u_{k,j_{k,2}}<...< u_{k,j_{k,J_k}}$, and $c_{k,j_{k,1}}<c_{k,j_{k,2}}<...<c_{k,j_{k,J_k}}$.\\
\indent The scheduler  now only needs to consider the jammers in the candidate set $\mathcal{A}_k$. To decouple the ``intra-ES" constraint (\ref{test 2}), we introduce the following transformation:
\begin{equation}
p_{k,j_{k,a}}= \left\{ \begin{array}{ll}
\widetilde{p}_{k,j_{k,a}}-\widetilde{p}_{k,j_{k,a+1}}, & \textrm{if $1\leq a\leq J_{k}-1$, }\\
\widetilde{p}_{k,j_{k,J_k}}, & \textrm{if $a = J_k$,}\\
\end{array} \right.
\end{equation}
where $\widetilde{p}_{k,j_{k,a}}\in[0,1]$, and $\widetilde{p}_{k,j_{k,a}}\leq\widetilde{p}_{k,j_{k,a+1}}$ for $1\leq a\leq J_k-1$. Substituting the transformations into ($\mathcal{SP}_k$) and reformulate it as
$$
(\widetilde{\mathcal{SP}_k}) \ \textup{maximize}\sum_{a=1}^{J_k}\widetilde{p}_{k,j_{k,a}}\widetilde{u}_{k,j_{k,a}},
$$
subject to:
\begin{subequations}\label{test 3}
\begin{numcases}{}
\sum_{a=1}^{J_k}\widetilde{p}_{k,j_{k,a}}\widetilde{c}_{k,j_{k,a}}\leq \rho_k,\\
\widetilde{p}_{k,j_{k,a}}\leq\widetilde{p}_{k,j_{k,a+1}}, 1\leq a\leq J_k-1,\\
\widetilde{p}_{k,j_{k,a}}\in[0,1], \forall a,
\end{numcases}
\end{subequations}
\noindent where
\begin{equation}
\widetilde{u}_{k,j_{k,a}}= \left\{ \begin{array}{ll}
u_{k,j_{k,1}}, & \textrm{if  $a=1$, }\\
u_{k,j_{k,a}}-u_{k,j_{k,a-1}}, & \textrm{if $2\leq a \leq J_k$,}\\
\end{array} \right.
\end{equation}
\begin{equation}
\widetilde{c}_{k,j_{k,a}}= \left\{ \begin{array}{ll}
c_{k,j_{k,1}}, & \textrm{if  $a=1$, }\\
c_{k,j_{k,a}}-c_{k,j_{k,a-1}}, & \textrm{if $2\leq a \leq J_k$,}\\
\end{array} \right.
\end{equation}
\indent Next, we show that the constraint(\ref{test 3}) can indeed be removed. For each $j_{k,a}$, we can view $\widetilde{c}_{k,j_{k,a}}$ and $\widetilde{u}_{k,j_{k,a}}$ as the energy demand and expected HE of a virtual jammer. Let $\widetilde{\eta}_{k,j_{k,a}}= \widetilde{u}_{k,j_{k,a}}/\widetilde{c}_{k,j_{k,a}}$  be the normalized expected HE of virtual jammer $j_{k,a}$. For $a=1$, using $\frac{u_{k,j_{k,1}}}{c_{k,j_{k,1}}}\geq\frac{u_{k,j_{k,2}}}{c_{k,j_{k,2}}}$, we can show that $\widetilde{\eta}_{k,j_{k,1}}\geq\widetilde{\eta}_{k,j_{k,2}}$. For $2\leq a \leq J_k-1$ using $\frac{u_{k,j_{k,a}}-u_{k,j_{k,a-1}}}{c_{k,j_{k,a}}-c_{k,j_{k,a-1}}}\geq\frac{u_{k,j_{k,a+1}}-u_{k,j_{k,a-1}}}{c_{k,j_{k,a+1}}-c_{k,j_{k,a-1}}}$, we can show that $\widetilde{\eta}_{k,j_{k,a}}\geq\widetilde{\eta}_{k,j_{k,a+1}}$. In other words, we can verify that $\widetilde{\eta}_{k,j_{k,1}}\geq\widetilde{\eta}_{k,j_{k,2}}\geq...\geq\widetilde{\eta}_{k,j_{k,J_k}}$. Thus, without constraint (\ref{test 3}), the optimal solution $\widetilde{\textbf{p}}_{k}=[\widetilde{p}^*_{k,j_1}, \widetilde{p}^*_{k,j_2},...,\widetilde{p}^*_{k,j_{J_k}}]$ automatically satisfied $\widetilde{p}^*_{k,j_1}\geq \widetilde{p}^*_{k,j_2}\geq\widetilde{p}^*_{k,j_{J_k}}$. Hence, we can remove the constraint (\ref{test 3}), and thus decouple the probability constraint under a ES.

We can thus rewrite the global ILP problem using the above transformations
$$
(\widetilde{\mathcal{LP}}'_{T,\bar{E}}) \ \textup{maximize}\sum_{k=1}^{K}\sum_{a=1}^{J_k}\pi_k\widetilde{p}_{k,j_{k,a}}\widetilde{u}_{k,j_{k,a}},$$
\quad\quad\quad\quad subject to  $\quad \sum_{k=1}^{K}\sum_{a=1}^{J_k}\pi_k\widetilde{p}_{k,j_{k,a}}\widetilde{c}_{k,j_{k,a}}\leq \frac{\bar{E}}{T},$
$$\quad\quad\quad\quad\quad\quad\quad \quad  \widetilde{p}_{k,j_{k,a}}\in[0,1],\forall k,  1\leq a\leq J_k.$$

 The solution of $\widetilde{\mathcal{LP}}'_{T,\bar{E}}$ follows a threshold structure. We sort all ES-(virtual-)jammer pairs ($k,j_a$) in descending order of their normalized expected HEs. Let $k^{(i)}$, $j^{(i)}$ be the context index and jammer index of the $i$-th pair, respectively. Namely, $\widetilde{\eta}_{k^{(1)},j^{(1)}}\geq\widetilde{\eta}_{k^{(2)},j^{(2)}}\geq...\geq\widetilde{\eta}_{k^{(M)},j^{(M)}}$, where $M = \sum_{k=1}^K J_k$ is the total number of candidate jammers for all ESs. Define a threshold corresponding to $\rho = \bar{E}/T$,
\begin{equation}
\widetilde{i}(\rho)= \textup{max}\{i:\sum_{i'=1}^i\pi_{k^(i')}\widetilde{c}_{k^{(i')},j^{(i')}}\leq \rho\},
\end{equation}
\noindent where $\rho=\bar{E}/T$ is the average energy budget. We can verify that the following solution is optimal for $\widetilde{\mathcal{LP}}'_{T,\bar{E}}$:
\begin{equation}
\widetilde{p}_{k^{(i)},j_{(i)}}= \left\{ \begin{array}{ll}
1, & \textrm{if  $1\leq i\leq \widetilde{i}(\rho)$, }\\
\frac{\rho-\sum_{i'=1}^{\widetilde{i}(\rho)}\pi_{k^(i')}\widetilde{c}_{k^(i'),j^(i')}}{\pi_{k^{(\widetilde{i}(\rho))}}\widetilde{c}_{k^{(\widetilde{i}(\rho)+1)},j^{(\widetilde{i}(\rho)+1)}}}, & \textrm{if $i = \widetilde{i}(\rho)+1$,}\\
0, & \textrm{if $i > \widetilde{i}(\rho)+1$.}\\
\end{array} \right.
\end{equation}
\noindent Then, the optimal solution of $\widetilde{\mathcal{LP}}'_{T,\bar{E}}$ can be calculated using the reverse transformation from $\widetilde{p}_{k,j}(\rho)$'s to $p_{k,j}(\rho)$'s.
\subsubsection{ALP Algorithm}
Obviously, the ALP algorithm replaces the average constraint $\bar{E}/T$ in $ \widetilde{\mathcal{LP}}'_{T,\bar{E}}$ with the average remaining energy budget $b_\tau/\tau$, and obtains probability $p_{k,j}(b_\tau/\tau))$. Under context $k$, the ALP algorithm take jammer $j$ with probability $p_{k,j}(b_\tau/\tau)$.  Note that the remaining energy budget $b_\tau$ does not follow any classic distribution in heterogeneous-energy-demand systems. However, by using the method of averaged bounded differerces\cite{23}, we show that there has the concentration property holds.

\begin{lemma}\label{lemma 12}
For $0<\delta <1$, there exists a positive number $\mathcal{K}$, such that under the $ALP$ algorithm, the remaining energy budget $b_\tau$ satisfies
$$\mathbb{P}\{b_\tau >(\rho+\delta)\tau\}\leq e^{-\mathcal{K}\delta^2\tau},$$
$$\mathbb{P}\{b_\tau <(\rho-\delta)\tau\}\leq e^{-\mathcal{K}\delta^2\tau}.$$
\end{lemma}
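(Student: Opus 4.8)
The plan is to apply the method of averaged bounded differences to $b_\tau$, viewed as a function of the i.i.d.\ context sequence $X_1,\dots,X_n$ (with $n=T-\tau$) together with the internal randomizations $U_1,\dots,U_n$ that ALP uses to pick $A_s$. Writing $\tau_s=T-s+1$ for the remaining time and $b_s$ for the remaining budget at the start of slot $s$, we have $b_\tau=b_{T-\tau+1}=\bar E-\sum_{s=1}^{T-\tau}Z_s$, and the target is concentration around $\rho\tau$. A direct bounded-differences (McDiarmid) argument fails because ALP is adaptive: perturbing one slot changes the ratio $b_s/\tau_s$ fed into every later LP, hence the law of every later decision. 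The averaged version handles exactly this, but it requires a bound on the influence of each coordinate on the conditional expectation, which is the crux.

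The enabling observation is a \emph{self-regulating} (martingale) identity. Because the threshold solution of $\widetilde{\mathcal{LP}}'_{T,\bar E}$ uses its budget with equality, and because the reverse transformation from $\widetilde p_{k,j_{k,a}}$ to $p_{k,j}$ is cost-preserving (an Abel-summation check gives $\sum_j p_{k,j}c_{k,j}=\sum_a\widetilde p_{k,j_{k,a}}\widetilde c_{k,j_{k,a}}$), the expected per-slot cost of ALP run with budget $b_s/\tau_s$ equals $b_s/\tau_s$ itself, away from the boundary where $b_s/\tau_s$ exceeds the largest attainable mean cost. Hence $\mathbb{E}[Z_s\mid\mathcal{H}_{s-1}]=b_s/\tau_s$, which yields $\mathbb{E}[b_{s+1}\mid\mathcal{H}_{s-1}]=b_s(\tau_s-1)/\tau_s$ and therefore
$$\mathbb{E}\!\left[\frac{b_{s+1}}{\tau_{s+1}}\,\Big|\,\mathcal{H}_{s-1}\right]=\frac{b_s}{\tau_s}.$$
Thus $M_s:=b_s/\tau_s$ is a martingale with $M_1=\bar E/T=\rho$, so $\mathbb{E}[b_\tau]=\rho\tau$ and, crucially, $\mathbb{E}[b_\tau\mid\mathcal{H}_{i-1}]=\tau\,b_i/\tau_i$.

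With this identity I would bound the averaged differences of the Doob martingale $V_i=\mathbb{E}[b_\tau\mid X_1,U_1,\dots,X_i,U_i]=\tau\,b_{i+1}/\tau_{i+1}$. Since $V_i-V_{i-1}=\tau(M_{i+1}-M_i)=\tau(M_i-Z_i)/(\tau_i-1)$ and $|M_i-Z_i|\le c_{\max}=\max_{k,j}c_{k,j}$ in the interior, we get $|V_i-V_{i-1}|\le \tau\,c_{\max}/(T-i)$. The weighted sum of squares telescopes favorably: $\sum_{i=1}^{T-\tau}\big(\tau c_{\max}/(T-i)\big)^2=\tau^2c_{\max}^2\sum_{u=\tau}^{T-1}u^{-2}\le \tau^2c_{\max}^2/(\tau-1)=O(\tau)$, the decaying step sizes removing one power of $\tau$. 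Azuma--Hoeffding then gives, with $t=\delta\tau$,
$$\mathbb{P}\{|b_\tau-\rho\tau|\ge\delta\tau\}\le 2\exp\!\left(-\frac{\delta^2\tau^2}{2\sum_i(V_i-V_{i-1})^2}\right)\le 2\exp(-\mathcal{K}\delta^2\tau),$$
for $\mathcal{K}\asymp 1/c_{\max}^2$, and splitting the two tails yields the two stated one-sided inequalities.

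The step I expect to be the main obstacle is making the identity $\mathbb{E}[Z_s\mid\mathcal{H}_{s-1}]=b_s/\tau_s$ rigorous together with the uniform bound on $M_s$ needed for the difference estimate. Both can break at the boundary: if $b_s/\tau_s$ rises above the maximal attainable mean cost the budget constraint in $\widetilde{\mathcal{LP}}'_{T,\bar E}$ goes slack and ALP can only underspend, so $M_s$ is merely a supermartingale and the per-coordinate influence must be re-bounded. I would show that in the regime $\rho\in(0,\sum_k\pi_k\max_a\widetilde c_{k,j_{k,a}})$ of interest the ratio $b_s/\tau_s$ stays interior with the very high probability furnished by the concentration we are proving, via a bootstrap/stopping-time argument, so the martingale identity holds on the relevant event and the slack boundary contributes only a lower-order term absorbed into $\mathcal{K}$; this is precisely the ``certain boundary cases'' caveat flagged earlier for the regret bound.
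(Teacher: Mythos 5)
Your proposal follows essentially the same route as the paper's proof: both rest on the martingale identity $\mathbb{E}[b_{\tau-1}/(\tau-1)\mid b_\tau=b]=b/\tau$ for the remaining average budget, bound the resulting Doob-martingale differences by $O(\tau c_{\max}/(T-t'))$, observe that the sum of their squares telescopes to $O(c_{\max}^2\tau)$, and invoke the method of averaged bounded differences (Azuma-type inequality) to obtain $\mathcal{K}\asymp \rho^2/c_{\max}^2$. Your explicit treatment of the boundary regime where the LP budget constraint goes slack (so $b_s/\tau_s$ is only a supermartingale) is a point the paper glosses over, but it does not change the argument's structure.
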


\begin{proof}
We prove the lemma using the method of averaged bounded differences\cite{23}. The process is similar to {{Section 7.1}} in \cite{23}, except that we consider the remaining energy budget and the successive differences of the remaining energy budget are bounded by $c_{\textup{max}}$.\\
\indent  Specifically, let $\widetilde{c}_{t'}$, $1\leq t'\leq T$ be the energy budget consumed under $ALP$, and let $\widetilde{\textbf{c}}_{t'}= (\widetilde{c}_{1},\widetilde{c}_{2},...,\widetilde{c}_{t'})$. Then the remaining energy budget at slot $t$ (the remaining time $\tau=T-t+1$), i.e., $b_{T-t+1}$ is a function of $\widetilde{\textbf{c}}_t$. We note that under ALP, the expectation of the ratio between the remain energy budget and the remaining time does not change, i.e., for any $b\leq\sum_{j=1}\pi_{k}c_j^*$(here $c_j^*=\textup{max}_kc_{k,j}$), if $b_\tau= b$, then $\mathbb{E}[b_{\tau-1}/(\tau-1)]=b/\tau$. Thus, we can verify that for any $1\leq t'\leq t$, we have
\begin{equation}
\mathbb{E}[b_{T-t+1}|\widetilde{\textbf{c}}_{t'}]=b_{T-t'+1}-\frac{b_{T-t'+1}}{T-t'+1}(t-t').\IEEEnonumber
\end{equation}
 Note that $\Delta b =b_{T-t'+2}-b_{T-t'+1}\leq c_{\textup{max}}$ and $b_{T-t'+2}\geq -c_{\textup{max}}$, we have
$$|\mathbb{E}[b_{T-t+1}|\widetilde{\textbf{c}}_{t'}]-\mathbb{E}[b_{T-t+1}|\widetilde{\textbf{c}}_{t'-1}]|$$
$$\quad\quad\leq \textup{max}_{0\leq \Delta b\leq c_{\textup{max}}}\{|\Delta b- \frac{b_{T-t'+2}}{T-t'+2}|\}\frac{T-t+1}{T-t'+1}$$
\begin{equation}
\quad\quad\leq \frac{2c_{\textup{max}}(T-t+1)}{T-t'+1}.\IEEEnonumber
\end{equation}
Moreover,
$$\sum_{t'=1}^t[\frac{2c_{\textup{max}(T-t+1)}}{T-t'+1}]^2 = 4c_{\textup{max}}^2(T-t+1)^2\sum_{t'=1}^t\frac{1}{(T-t'+1)^2}$$
$$\quad\quad\quad\quad\quad\quad\quad =4c_{\textup{max}}^2(T-t+1)^2\sum_{\tau'=T-t+1}^T \frac{1}{(\tau')^2}$$
$$\quad\quad\quad\quad\quad\quad\quad\quad\approx 4c_{\textup{max}}^2(T-t+1)^2 \int_{T-t+1}^T \frac{1}{(\tau')^2}d\tau'$$
\begin{equation}
\quad \ \ \quad\quad\quad\quad\quad=4c_{\textup{max}}^2(T-t+1)\frac{t-1}{T}\IEEEnonumber
\end{equation}
  According to {{Theorem 5.3}} in \cite{23}, and noting $\tau = T-t+1$, $\mathbb{E}[b_\tau]=\rho\tau$, we have
\begin{equation}\mathbb{P}\{b_\tau>\mathbb{E}[b_\tau]+\delta_\tau\}\leq e^{-\frac{2T(\delta\rho\tau)^2}{4c_{\textup{max}}^2(T-t+1)(t-1)}}\leq e^{-\frac{T\delta^2\bar{E}^2\tau}{2c_{\textup{max}}^2T^2(t-1)}}\leq e^{-\frac{\delta^2\rho^2\tau}{2c^2_{\textup{max}}}},\IEEEnonumber
\end{equation}
and similarly,
\begin{equation}
\mathbb{P}\{b_\tau <\mathbb{E}[b_\tau]-\delta_\tau\}\leq e^{-\frac{\delta^2\rho^2\tau}{2c_{\textup{max}}^2}},\IEEEnonumber
\end{equation}
Choosing $\mathcal{K}=\frac{\rho^2}{2c_{\textup{max}}^2}$ concludes the proof.
\end{proof}

Then, using similar methods in {\color{red}{Section 3}}, we can show that the generalized $ALP$ algorithm achieves $O(1)$ regret in non-boundary cases, and $O(\sqrt{T})$ regret in boundary cases, where the boundaries are now defined as $Q_i= \sum_{i'=1}^i\pi_{k^{(i')}}\widetilde{c}_{k^{(i')},j^{(i')}}$.\\

\indent Next, we show that the $\epsilon$-First policy with CLT will achieve $O(logT)$ regret except for the boundary cases, where it achieves $O(\sqrt{T})$ regret. On one hand, according to Hoeffding-Chernoff bound, if all comparisons pass the confidence level test, then with probability at least $1-KJ^2T^{-2}$, the algorithm obtains the correct rank and provide a right solution for the problem $(\mathcal{LP}'_{\tau,b})$. On the other hand, because $\Delta^*>0$, from the analysis in the previous section, we know that the exploration stage will end within $O(logT)$ rounds with high probability. Therefore, the expected regret is the same as that in the case with known $\Delta_{\textup{min}}^{(\epsilon)}$.\\
\indent Lemma 2 also states that the average remaining budget $\delta$, $\tau$ stays in a neighborhood of the initial average budget $\rho$ with high probability. Hence, if the initial average budget $\rho$ is not on boundaries, i.e., the critical values under which the threshold $\widetilde{j}(\rho)$ changes, then the probability of threshold changing is bounded. Therefore, we can show that the ALP algorithm achieves a very good performance within a constant distance from the optimum, except for certain boundary cases. Specifically, for $1\leq k\leq K$, let $q_k$ be the cumulative distribution function, i.e., $q_k=\sum_{k'=1}^k\pi_{k'}$, and w.l.o.g., let $q_0=0$. The following theorem states the approximate optimality of ALP for the cases where $\rho\neq q_k(k=1,2,...,	K-1)$. We note that $k=0$ and $k=K$ are trivial cases where ALP is optimal.
\begin{theorem}
Given any fixed $\rho\in(0,1)$ satisfying $\rho\neq q_k$, $k=1,2,...,K-1$, the ALP algorithm achieves an $O(1)$ regret. Specifically,$$U^*(T,\bar{E})-U_{ALP}(T,\bar{E})\leq\frac{u_1^*-u_K}{1-e^{-2\delta^2}},$$
\noindent where $\delta-\min\{\rho-q_{\widetilde{k}},q_{\widetilde{k}+1}-\rho\}$.
\end{theorem}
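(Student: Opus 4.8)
The plan is to upper-bound the offline optimum by the value of the relaxed linear program and then show that, slot by slot, the adaptive program ALP loses reward only on the rare event that the remaining average budget $b_\tau/\tau$ drifts across a breakpoint of the threshold structure. Let $\hat v(\cdot)$ denote the optimal value of $\widetilde{\mathcal{LP}}'_{T,\bar E}$ regarded as a function of the average budget; as in the LP-relaxation bound (cf.\ Lemma 1), $U^*(T,\bar E)\le T\hat v(\rho)$. The structural facts I would record first are that $\hat v$ is concave and piecewise linear, with breakpoints exactly at the cumulative masses $q_k=\sum_{k'\le k}\pi_{k'}$, and that it is \emph{affine} on the interval $(q_{\tilde k},q_{\tilde k+1})$ containing $\rho$; here $\tilde k$ is the unique index with $q_{\tilde k}<\rho<q_{\tilde k+1}$, which exists precisely because $\rho\neq q_k$ for all $k$, and the slope there is the marginal reward $u^*_{\tilde k+1}$.

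Next I would express the ALP reward one slot at a time. At a slot with $\tau=T-t+1$ slots remaining and remaining budget $b_\tau$, ALP plays the LP-optimal randomized rule for the budget $b_\tau/\tau$, so its conditional expected reward equals $\hat v(b_\tau/\tau)$; summing yields $U_{ALP}(T,\bar E)=\sum_{\tau=1}^{T}\mathbb{E}[\hat v(b_\tau/\tau)]$ and hence a regret of $\sum_{\tau=1}^{T}\big(\hat v(\rho)-\mathbb{E}[\hat v(b_\tau/\tau)]\big)$. Writing $\ell$ for the affine extension of $\hat v$ on the piece through $\rho$ (so $\hat v\le\ell$ by concavity, with equality on $[q_{\tilde k},q_{\tilde k+1}]$) and $s$ for its slope, I would decompose $\hat v(\rho)-\hat v(x)=\big(\ell(x)-\hat v(x)\big)+s(\rho-x)$. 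The martingale identity $\mathbb{E}[b_\tau/\tau]=\rho$ established inside the proof of Lemma \ref{lemma 12} kills the linear term in expectation, leaving
\[
\hat v(\rho)-\mathbb{E}[\hat v(b_\tau/\tau)]=\mathbb{E}\big[(\ell-\hat v)(b_\tau/\tau)\big].
\]
Since $\ell-\hat v$ vanishes on $[q_{\tilde k},q_{\tilde k+1}]$ and is at most the reward range $u_1^*-u_K$ everywhere on $[0,1]$, this is bounded by $(u_1^*-u_K)\,\mathbb{P}\{b_\tau/\tau\notin(q_{\tilde k},q_{\tilde k+1})\}$.

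Finally I would close with concentration. By the choice $\delta=\min\{\rho-q_{\tilde k},q_{\tilde k+1}-\rho\}$, leaving the affine piece forces $|b_\tau/\tau-\rho|\ge\delta$, an event whose probability is at most $e^{-2\delta^2\tau}$ by Hoeffding's inequality (the concentration of Lemma \ref{lemma 12}). Summing the geometric series, $\sum_{\tau=1}^{T}(u_1^*-u_K)e^{-2\delta^2\tau}\le(u_1^*-u_K)/(1-e^{-2\delta^2})$, which is the asserted $O(1)$ bound. I expect the main obstacle to be the decomposition step: one must argue rigorously that in the non-boundary regime the first-order fluctuation of $b_\tau/\tau$ averages out---so that no $O(\sqrt T)$ term survives, in contrast to the boundary case---and that the residual $\ell-\hat v$ is simultaneously supported off the affine piece and uniformly controlled by $u_1^*-u_K$. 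The hard-budget corner case, in which ALP cannot realize $\hat v(b_\tau/\tau)$ because the budget is nearly exhausted, must likewise be absorbed into the same exponentially small boundary-crossing event.
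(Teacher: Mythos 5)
Your proposal is correct and follows essentially the same route as the paper, whose proof sketch rests on exactly the two facts you formalize: that $\mathbb{E}[\hat v(b_\tau/\tau)]=\hat v(\rho)$ as long as the threshold $\widetilde{j}(b_\tau/\tau)$ agrees with $\widetilde{j}(\rho)$ (your affine-piece decomposition plus the martingale identity $\mathbb{E}[b_\tau/\tau]=\rho$), and that the threshold-change probability decays exponentially by the concentration of Lemma~\ref{lemma 12}, after which summing the geometric series gives the stated constant. The only caveat is cosmetic: Lemma~\ref{lemma 12} as stated yields the exponent $\mathcal{K}\delta^2\tau$ with $\mathcal{K}=\rho^2/(2c_{\max}^2)$ rather than the $2\delta^2\tau$ you (and the theorem statement) use, an inconsistency inherited from the paper itself rather than a flaw in your argument.
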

\begin{proof}
The proof of this theorem uses the following two facts derived from {\color{red}{lemma 2}}: $\mathbb{E}[v(b_\tau/\tau)]=v(p)$ if the threshold $\widetilde{j}(b_\tau/\tau)=\widetilde{j}(\rho)$ for all possibel $b_\tau$'s, and the probability that $\widetilde{j}(b_\tau/\tau)\neq\widetilde{j}(\rho)$ decays exponentially. Please referred to {\color{red}{Appendix B.1}} of the supplementary material for details.\\
\indent When considering the boundary eases, we can show similarly that the ALP achieves $O(\sqrt{T})$ regret.
\end{proof}
\begin{theorem}
Given any fixed $\rho=q_j$, $k=1,2,...,K-1$, the ALP algorithm achieves $O(\sqrt{T})$ regret. Specifically, $$U^*(T,\bar{E})-U_{ALP}(T,\bar{E})\leq\Theta^{(o)}\sqrt{T}+\frac{u_1^*-u_K}{1-e^{-2{\delta'}^2}},$$
\noindent where $\Theta^{(o)}= 2(u^*_1-U^*_K)\sqrt{\rho(1-\rho)}$ and $\delta'=\min\{\rho-q_{\widetilde{j}(\rho)-1},q_{\widetilde{j}(\rho)+1}-\rho\}$.
\end{theorem}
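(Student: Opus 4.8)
The plan is to reuse the decomposition of the previous (non-boundary) theorem and to isolate the \emph{additional} loss created by the fact that here $\rho=q_{\widetilde{j}(\rho)}$ sits exactly at a breakpoint of the value function. First I would invoke the linear-program upper bound established earlier, namely $U^*(T,\bar E)\le T\hat v(p)=Tv(\rho)$, where $v(\cdot)$ denotes the optimal value of $\widetilde{\mathcal{LP}}'_{T,\bar E}$ viewed as a function of the average budget. Since the ALP per-slot expected reward at remaining time $\tau$ and remaining budget $b_\tau$ equals $v(b_\tau/\tau)$ whenever the budget suffices to execute the chosen action, the regret is controlled by
$$U^*(T,\bar E)-U_{ALP}(T,\bar E)\le \sum_{\tau=1}^{T}\mathbb{E}\big[v(\rho)-v(b_\tau/\tau)\big]+\Xi,$$
where $\Xi$ is a lower-order correction accounting for the event that the hard budget is exhausted before time $T$. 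The function $v$ is concave and piecewise linear with breakpoints exactly at the cumulative masses $q_k$; combined with the martingale identity $\mathbb{E}[b_\tau]=\rho\tau$ recorded inside the proof of Lemma~\ref{lemma 12}, Jensen's inequality makes every summand nonnegative.

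The heart of the argument is to split each summand according to whether $b_\tau/\tau$ stays in the two linear pieces adjacent to the kink at $\rho$ or escapes beyond distance $\delta'$. On the ``near'' event $\{|b_\tau/\tau-\rho|\le\delta'\}$, $v$ is affine on each side of $\rho$ with a slope drop bounded by the reward range, so $v(\rho)-v(b_\tau/\tau)\le (u_1^*-u_K^*)\,|b_\tau/\tau-\rho|$. I would then estimate $\mathbb{E}\,|b_\tau/\tau-\rho|\le\sqrt{\operatorname{Var}(b_\tau/\tau)}$ and argue that, precisely because $\rho$ is the breakpoint, the one-step budget increments under ALP behave like a Bernoulli$(\rho)$ draw between the two adjacent thresholds, giving $\operatorname{Var}(b_\tau)\approx\rho(1-\rho)\tau$ and hence $\mathbb{E}\,|b_\tau/\tau-\rho|\lesssim\sqrt{\rho(1-\rho)/\tau}$. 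Summing through $\sum_{\tau=1}^{T}\tau^{-1/2}\le 2\sqrt T$ produces exactly the leading term $\Theta^{(o)}\sqrt T=2(u_1^*-u_K^*)\sqrt{\rho(1-\rho)}\,\sqrt T$.

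On the complementary ``far'' event $\{|b_\tau/\tau-\rho|>\delta'\}$, the trajectory has crossed into a non-adjacent linear piece, so I would bound the gap crudely by $v(\rho)-v(b_\tau/\tau)\le u_1^*-u_K$ and weight it by its probability. Applying the two-sided concentration of Lemma~\ref{lemma 12} with deviation $\delta'$ gives a tail of order $e^{-2\delta'^2\tau}$, and summing the resulting geometric series yields $\sum_{\tau\ge1}(u_1^*-u_K)\,e^{-2\delta'^2\tau}\le \frac{u_1^*-u_K}{1-e^{-2\delta'^2}}$, which is exactly the constant term of the claimed bound. Adding the near and far contributions, and absorbing $\Xi$, delivers the inequality.

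The hard part will be the near-boundary variance estimate in the second step: unlike a sum of i.i.d.\ increments, $b_\tau$ is driven by the adaptive probabilities $p_{k,j}(b_\tau/\tau)$, so its increments are neither independent nor identically distributed, and the exponential concentration of Lemma~\ref{lemma 12} by itself does not pin down the constant $\sqrt{\rho(1-\rho)}$ appearing in $\Theta^{(o)}$. My resolution is to exploit that at the breakpoint ALP randomizes between the thresholds $\widetilde j(\rho)$ and $\widetilde j(\rho)+1$ with a near-fair split governed by $\rho$, so that the averaged-bounded-difference / martingale machinery of \cite{23} already invoked for Lemma~\ref{lemma 12} also yields both the $\tau^{-1/2}$ decay of $\mathbb{E}\,|b_\tau/\tau-\rho|$ and the correct variance proxy $\rho(1-\rho)$. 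The remaining technical care is to verify that the budget-exhaustion correction $\Xi$ is genuinely of lower order than $\sqrt T$ and does not inflate the leading coefficient.
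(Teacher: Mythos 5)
Your overall route is the one the paper itself only gestures at: the paper gives no self-contained proof of this theorem (it merely remarks, inside the proof of the preceding non-boundary theorem, that the boundary cases ``can be shown similarly'' and defers the details to a supplementary appendix that is not part of the text). Your reconstruction is the natural one and matches that intent: bound $U^*(T,\bar E)$ by $Tv(\rho)$ via the LP relaxation, write the regret as $\sum_{\tau}\mathbb{E}[v(\rho)-v(b_\tau/\tau)]$ plus a budget-exhaustion correction, and split each summand according to whether $b_\tau/\tau$ stays within $\delta'$ of the kink. Your ``far'' branch correctly produces the geometric-series constant $\frac{u_1^*-u_K}{1-e^{-2\delta'^2}}$, and your ``near'' branch correctly isolates the slope drop at the kink as the only source of loss, since $\mathbb{E}[b_\tau/\tau]=\rho$ and $v$ is affine away from the breakpoints.

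There is, however, a genuine gap, and it is essentially the one you flag yourself. First, the leading coefficient: $\operatorname{Var}(b_\tau)\approx\rho(1-\rho)\tau$ is a Bernoulli heuristic that is literally valid only when every scheduled action consumes exactly one unit of budget; in the heterogeneous-cost setting of this section the per-step decrement is $\widetilde c_{k,j}\in(0,c_{\max}]$ and depends on the realized context and the adaptive probabilities, so the increments are neither Bernoulli nor i.i.d., and neither your sketch nor anything proved in the paper pins down the constant $\sqrt{\rho(1-\rho)}$ in $\Theta^{(o)}$ --- the averaged-bounded-differences machinery yields at best an $O(c_{\max}\sqrt{T})$ leading term. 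Second, the exponent: Lemma 2 as actually proved gives tails $e^{-\mathcal{K}\delta^2\tau}$ with $\mathcal{K}=\rho^2/(2c_{\max}^2)$, not $e^{-2\delta^2\tau}$, so summing the far-event contributions gives $\frac{u_1^*-u_K}{1-e^{-\mathcal{K}\delta'^2}}$ rather than the stated denominator. Both discrepancies are inherited from the theorem statement itself, which appears to be carried over from a unit-cost analysis; to close the argument rigorously you would either restrict to $c_{k,j}\equiv 1$ or restate the bound with $c_{\max}$-dependent constants. Apart from this, your decomposition and the handling of the two events are sound.
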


\subsection{$\epsilon$-ESs-JamNet-UCB-AILP Algorithm: Unknown CSI at ESs}
Due to the online placement of passive jammers, the channel gain $h^t_{k,j}(v)$ is usually unknown and so that the rewards of HEs from ESs to the
jammer. Now, we consider the practical case that the expected energy rewards are unknown, and online
learning algorithms are called for, e.g. UCB \cite{Finite02}. As noticed, it is difficult to combine UCB method directly with the proposed ALP for the general
heterogeneous-energy-demand systems. However, we can narrow down to  a special and very reasonable case, when all jammers have the same energy demand under a given ES, i.e., $c_{k,j}=c_k$ for all $j$ and $k$, the normalized expected HE $\eta_{k,j}$ represents the quality of jammer $j$ under ES $k$. In this case, the candidate jammer set for each ES only contains one jammer to be scheduled, which is the jammer with the highest expected HE. Thus, the previous ALP algorithm for the known statistics case is simple. When the expected energy rewards are unknown, we can extend the UCB-AILP algorithm by managing the UCB for the normalized expected energy rewards.

When the energy demands for different jammers under the same ES are heterogeneous, it is difficult to combine AILP with the UCB method since the ALP algorithm in this case not only requires the ordering of $\eta_{k,j}$'s, but also the ordering of $u_{k,j}$'s and the ratios  $\frac{u_{k,j_1}-u_{k,j_2}}{c_{k,j_1}-c_{k,j_2}}$. We propose an $\epsilon$-ESs-JamNet-UCB-AILP Algorithm that explores and exploits separately: the scheduler  takes jammers under all ESs in the first $\epsilon(T)$ rounds to estimate the expected energy rewards, and runs $ALP$ based on the estimates in the remaining $T-\epsilon(T)$ rounds.

\begin{algorithm}[h]
\caption{$\epsilon$-ESs-JamNet-UCB-AILP}\label{algorithm 5}
\begin{algorithmic}[1]
\REQUIRE Time horizon $T$, energy budget $\bar{E}$, exploration stage length $\epsilon(T)$, and $c_{k,j}$'s, for all $k$ and $j$;\
\FOR {$t=1$ \TO $\epsilon(T)$}
\IF{$b>0$}
\STATE Take jammer $A_t=\textup{argmin}_{k\in\mathcal{A}}C_{X_t,j}$ (with random tie-breaking);\
\STATE Observe the HE  $Y_{A_t,t}$;\
\STATE Update counter $C_{X_t,A_t}=C_{X_t,A_t}+1$; update remaining energy budget $b=b-c_{X_t,A_t}$;\
\STATE Update the HE estimate:
$$\bar{u}_{X_t,A_t}=\frac{(C_{X_t,A_t}-1)\bar{u}_{X_t,A_t}+Y_{A_t,t}}{C_{X_t,A_t}}.$$
\ENDIF
\ENDFOR
\FOR {$t=\epsilon(T)+1$ \TO $T$}
\STATE Remaining time $\tau = T-t+1$;
\IF{$b>0$}
\STATE Obtain the probabilities $p_{k,j}(b/\tau)$'s by solving the problem ($\mathcal{LP}'_{\tau,b}$) with $u_{k,j}$ replaced by $\bar{u}_{k,j}$;\
\STATE Take jammer $j$ with probability $p_{X_t,j}(b/\tau)$;\
\STATE Remaining energy budget $b=b-c_{X_t,A_t}$:\
\ENDIF
\ENDFOR
\end{algorithmic}
\end{algorithm}

For the case of exposition, we assume $c_{k,j_1}\neq c_{k,j_2}$ for any $k$ and $j_1\neq j_2$\footnote{For the case with $c_{k,j_1}= c_{k,j_2}$ for some $k$ and $j_1\neq j_2$ (and $u_{k,j_1}\neq u_{k,j_2}$), we can correctly remove the suboptimal jammer with high probability by comparing their empirical energy rewards $\bar{u}_{k,j_1}=\bar{u}_{k,j_2}$}, and let $\Delta_{\textup{min}}^{(c)}=\min\limits_{\mbox{\tiny$\begin{array}{c}
k\in\mathcal{X}\\
j_1,j_2\in\{0\}\cup\mathcal{A}\end{array}$}}\{|c_{k,j_1}-c_{k,j_2}|\}$. Let $\varepsilon_{k,j_1,j_2}=\frac{u_{k,j_1}-u_{k,j_2}}{c_{k,j_1}-c_{k,j_2}}$ for $k \in\mathcal{X}$, $j_1,j_2\in\{0\}\cup\mathcal{A}$, and $j_1\neq j_2$ (recall that $u_{k,0}=0$ and $c_{k,0}=0$ for the ``dummy jammer"), $\bar{\varepsilon}_{k,j_1,j_2}$ be its estimate at the end of the exploration stage, i.e., $\bar{\varepsilon}_{k,j_1,j_2}=\frac{\bar{u}_{k,j_1}-\bar{u}_{k,j_2}}{c_{k,j_1}-c_{k,j_2}}$. Let $\Delta_{\textup{min}}^{(\varepsilon)}$ be the minimal difference between any $\varepsilon_{k_1,j_{11},j_{12}}$ and $\varepsilon_{k_2,j_{21},j_{22}}$, i.e.,
$$\Delta_{\textup{min}}^{(\varepsilon)}=\min\limits_{\mbox{\tiny$\begin{array}{c}
k_1,k_2\in\mathcal{X}\\
j_{11},j_{12},j_{21},j_{22}\in\{0\}\cup\mathcal{A}\end{array}$}}\{|\varepsilon_{k_1,j_{11},j_{12}}-\varepsilon_{k_2,j_{21},j_{22}}|\}.
$$
\indent Moreover, let $\pi_{\textup{min}}=\min_{k\in\mathcal{X}}\pi_{k}$ and let $\Delta^*=\Delta_{\textup{min}}^{(c)}\Delta_{\textup{min}}^{(\varepsilon)}$. Then, the following lemma states that under $\epsilon$-ESs-JamNet-UCB-AILP with a sufficiently large $\epsilon(T)$, the scheduler  will obtain a correct ordering of $\varepsilon_{k,j_1,j_2}$'s with high probability at the end of the exploration stage.

\begin{lemma}\label{lemma 13}
Let $0<\delta<1$. Under $\epsilon$-ESs-JamNet-UCB-AILP, if
$$\epsilon(T)=\lceil\frac{K}{(1-\delta)\pi_{\textup{min}}}+\textup{log} T \textup{max}\{\frac{1}{\delta^2},\frac{16K}{(1-\delta)\pi_{\textup{min}}(\Delta^*)^2}\}\rceil,$$
\noindent then for any contexts $k_1,k_2\in\mathcal{X}$, and jammers $j_{11},j_{12},j_{21},j_{22}\in\{0\}\cup\mathcal{A}$, if $\varepsilon_{k_1,j_{11},j_{12}}<\varepsilon_{k_2,j_{21},j_{22}}$, then at the end of the $\epsilon(T)$-th slot, we have
$$\mathbb{P}\{\bar{\varepsilon}_{k_1,j_{11},j_{12}}<\bar{\varepsilon}_{k_2,j_{21},j_{22}}\}\leq(K+4)T^{-2}.$$
\indent Moreover, the scheduler  ranks all the $\varepsilon_{k,j_1,j_2}$'s correctly with probability no less than $1-(4J+1)KT^{-2}$.\\
\end{lemma}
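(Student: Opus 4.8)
The plan is to prove that, after the exploration stage, every difference quotient estimate $\bar\varepsilon_{k,j_1,j_2}$ lies within $\Delta_{\min}^{(\varepsilon)}/2$ of the true slope $\varepsilon_{k,j_1,j_2}$ with high probability. Once this holds, two slopes whose true values differ by at least $\Delta_{\min}^{(\varepsilon)}$ cannot have their estimates cross, so the empirical order coincides with the true order. (Note that the displayed inequality really controls the \emph{misordering} event $\{\bar\varepsilon_{k_1,j_{11},j_{12}}\ge\bar\varepsilon_{k_2,j_{21},j_{22}}\}$ when $\varepsilon_{k_1,j_{11},j_{12}}<\varepsilon_{k_2,j_{21},j_{22}}$; this is the quantity the argument bounds.) The proof proceeds by stacking three concentration layers: the random sample counts, the reward-mean estimates, and the propagation of reward error into the ratio $\varepsilon$.

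First I would control how often each context--jammer pair is sampled. The exploration rule $A_t=\arg\min_{j\in\mathcal{A}}C_{X_t,j}$ is round-robin within each context, so if context $k$ appears $N_k$ times among the first $\epsilon(T)$ slots then every jammer is pulled at least $\lfloor N_k/J\rfloor$ times under $k$. Since $X_1,X_2,\dots$ are i.i.d. with $\mathbb{P}\{X_t=k\}=\pi_k$, a Chernoff bound gives $N_k\ge(1-\delta)\pi_k\epsilon(T)\ge(1-\delta)\pi_{\min}\epsilon(T)$ for every $k$ outside a ``bad-frequency'' event of probability at most $KT^{-2}$ by a union bound over the $K$ contexts. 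The additive term $K/((1-\delta)\pi_{\min})$ in $\epsilon(T)$ absorbs the integer rounding of the $1/J$ split and supplies a positive baseline, while the $\log T/\delta^2$ term makes the frequency deviation negligible; together with the $16K\log T/((1-\delta)\pi_{\min}(\Delta^*)^2)$ term they guarantee that, on the good event, every pair is pulled at least $m=\Theta(\log T/(\Delta^*)^2)$ times.

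Conditioned on the schedule (which depends only on the context stream, hence is independent of the rewards), $\bar u_{k,j}$ is an average of at least $m$ i.i.d. samples in $[0,1]$ with mean $u_{k,j}$, so Hoeffding gives $\mathbb{P}\{|\bar u_{k,j}-u_{k,j}|\ge\Delta^*/4\}\le 2\exp(-m(\Delta^*)^2/8)\le 2T^{-2}$ for the chosen $m$. I then propagate this into the slope. Because the demands $c_{k,j}$ are known exactly and $|c_{k,j_1}-c_{k,j_2}|\ge\Delta_{\min}^{(c)}$,
$$|\bar\varepsilon_{k,j_1,j_2}-\varepsilon_{k,j_1,j_2}|\le\frac{|\bar u_{k,j_1}-u_{k,j_1}|+|\bar u_{k,j_2}-u_{k,j_2}|}{\Delta_{\min}^{(c)}},$$
so if both reward errors are below $\Delta^*/4=\Delta_{\min}^{(c)}\Delta_{\min}^{(\varepsilon)}/4$ the slope error is below $\Delta_{\min}^{(\varepsilon)}/2$, as required. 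The misordering event therefore forces either the bad-frequency event or at least one of the four rewards $u_{k_1,j_{11}},u_{k_1,j_{12}},u_{k_2,j_{21}},u_{k_2,j_{22}}$ to deviate by $\ge\Delta^*/4$; a union bound over the $K$ frequency events and these four reward deviations yields the $(K+4)T^{-2}$ bound. For the global ranking I would instead demand accuracy of all $KJ$ estimated rewards simultaneously: if each is within $\Delta^*/4$ then \emph{every} slope, and hence the entire order, is correct, and a union bound over the $K$ frequency events and the $KJ$ reward deviations gives the $(4J+1)KT^{-2}$ bound.

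The main obstacle is the passage from reward error to slope error: $\varepsilon$ is a nonlinear difference quotient, so an additive error in the $u$'s is amplified by $1/|c_{k,j_1}-c_{k,j_2}|$, worst-case $1/\Delta_{\min}^{(c)}$. This is exactly why the accuracy target must be the \emph{product} $\Delta^*=\Delta_{\min}^{(c)}\Delta_{\min}^{(\varepsilon)}$, and why the per-pair sample requirement blows up to $m=\Theta(\log T/(\Delta^*)^2)$. Feeding this requirement back through the random, context-dependent sampling of the first layer --- so that the doubly random object $\bar u_{k,j}$ (random counts, random samples) still concentrates at the $T^{-2}$ rate --- is the delicate bookkeeping; the independence between the sampling schedule and the rewards is what lets Hoeffding be applied conditionally and keeps the two sources of randomness from entangling.
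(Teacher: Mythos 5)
Your proof follows essentially the same route as the paper's: a Chernoff bound on the context frequencies (bad event of probability $KT^{-2}$), the round-robin exploration rule forcing $C_{k,j}\ge 16\log T/(\Delta^*)^2$ on the good event, a decomposition of the misordering event into four reward deviations of size $\Delta^*/4$ (the paper writes this as an explicit four-term algebraic identity, you via the triangle inequality through $1/\Delta_{\min}^{(c)}$ --- the same mechanism), and a union bound; you also correctly identify that the displayed inequality is meant to bound the misordering event. The only discrepancy is a trivial constant: your two-sided Hoeffding bound of $2T^{-2}$ per reward literally yields $(K+8)T^{-2}$ rather than $(K+4)T^{-2}$, which the paper avoids by using one-sided deviations for each of the four signed terms.
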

\begin{proof}
We first analyze the number of executions for each ES-jammer pair $(k,j)$ in the exploration stage. Let $N_k = \sum\nolimits_{t=1}^{\epsilon(T)}\mathds{1}(X_t=k)$ be the number of occurrences of ES $k$ up to slot $\epsilon(T)$. Recall that the ESs $X_t$ is activated i.i.d. in each slot by the time-division protocol. Thus, using Hoeffding-Chernoff Bound for each ES $k$, we have
\begin{flushleft}
$\mathbb{P}\{\forall k\in \mathcal{X}, N_{k}\geq(1-\delta)\pi_k\epsilon(T)\}$\\
$\geq 1-\sum\limits_{k=1}^{K}\mathbb{P}\{N_k<(1-\delta)\pi_k\epsilon(T)\}$\\
$\geq1-Ke^{-2\delta^2\epsilon(T)}\geq1-Ke^{-2logT}=1-KT^{-2}.$\\
\end{flushleft}
\indent On the other hand, the lower bound $(1-\delta)\pi_k\epsilon(T)\geq J+\frac{16JlogT}{(\Delta^*)^2}$, then
\begin{equation}
C_{k,j}\geq\lfloor1+\frac{16logT}{(\Delta^*)^2}\rfloor\geq\frac{16logT}{(\Delta^*)^2},\forall j\in\mathcal{A}.
\end{equation}
\indent Therefore,
\begin{equation}
\mathbb{P}\{\forall k\in\mathcal{X}, \forall j\in\mathcal{A}, C_{k,j}\geq\frac{16logT}{(\Delta^*)^2}\}\geq1-JT^{-2}
\end{equation}
\indent Next, we study the relationship between the estimates $\bar{\varepsilon}_{k_1,j_{11},j_{12}}$ and $\bar{\varepsilon}_{k_2,j_{21},j_{22}}$ at the end of the exploration stage. We note that
\begin{flushleft}
$\bar{\varepsilon}_{k_1,j_{11},j_{12}}\geq\bar{\varepsilon}_{k_2,j_{21},j_{22}}$\\
$\Leftrightarrow(\bar{\varepsilon}_{k_1,j_{11},j_{12}}-\varepsilon_{k_1,j_{11},j_{12}}-\frac{\varepsilon_{k_2,j_{21},j_{22}}-\varepsilon_{k_1,j_{11},j_{12}}}{2})$\\
$-(\bar{\varepsilon}_{k_2,j_{21},j_{22}}-\varepsilon_{k_2,j_{21},j_{22}}-\frac{\varepsilon_{k_2,j_{21},j_{22}}-\varepsilon_{k_1,j_{11},j_{12}}}{2})\geq 0$\\
$\Leftrightarrow (\frac{\bar{u}_{k_1,j_{11}}-u_{k_1,j_{11}}}{c_{k_1,j_11}-c_{k_1,j_{12}}}-\frac{\varepsilon_{k_2,j_{21},j_{22}}-\varepsilon_{k_1,j_{11},j_{12}}}{4})$\\
$-(\frac{\bar{u}_{k_1,j_{12}}-u_{k_1,j_{12}}}{c_{k_1,j_{11}}-c_{k_1,j_{12}}}+\frac{\varepsilon_{k_2,j_{21},j_{22}}-\varepsilon_{k_1,j_{11},j_{12}}}{4})$\\
$- (\frac{\bar{u}_{k_2,j_{21}}-u_{k_2,j_{21}}}{c_{k_2,j_{21}}-c_{k_2,j_{22}}}+\frac{\varepsilon_{k_2,j_{21},j_{22}}-\varepsilon_{k_1,j_{11},j_{12}}}{4})$\\
$+(\frac{\bar{u}_{k_2,j_{22}}-u_{k_2,j_{22}}}{c_{k_2,j_21}-c_{k_1,j_{22}}}-\frac{\varepsilon_{k_2,j_{21},j_{22}}-\varepsilon_{k_1,j_{11},j_{12}}}{4})\geq 0$.
\end{flushleft}
\indent Thus, for the event $\bar{\varepsilon}_{k_1,j_{11},j_{12}}\geq\bar{\varepsilon}_{k_2,j_{21},j_{22}}$ to be true, we require that at least one term (with the sign) in the last inequality above is no less than zero. Conditioned on $C_{k,j}\geq\frac{16logT}{(\Delta^*)^2}$, we can bound the probability of each term according to the Hoeffding-Chefnoff bound, e.g., for the first term, we have
\begin{flushleft}
$\mathbb{P}\{\frac{\bar{u}_{k_1,j_{11}}-u_{k_1,j_{11}}}{c_{k_1,j_11}-c_{k_1,j_{12}}}-\frac{\varepsilon_{k_2,j_{21},j_{22}}-\varepsilon_{k_1,j_{11},j_{12}}}{4}\geq 0|C_{k_1,j_{11}\geq\frac{16logT}{(\Delta^*)^2}}\}$\\
$\leq\mathbb{P}\{\bar{u}_{k_1,j_{11}\geq u_{k_1,j_{11}}+\frac{\Delta^*}{4}|C_{k_1,j_{11}}}\geq \frac{16logT}{(\Delta^*)^2}\}$\\
$\leq e^{-2logT} =T^2$.
\end{flushleft}
\indent The conclusion then follows by considering the event $\{C_{k,j}\geq \frac{16logT}{(\Delta^*)^2},\forall k\in\mathcal{X},\forall j\in\mathcal{X}\}$ and its negation.
\end{proof}

\begin{theorem}\label{theorem 5}
Let $0<\delta<1$. Under $\epsilon$-ESs-JamNet-UCB-AILP, if
$$\epsilon(T)\geq\frac{J}{(1-\delta)\pi_{\textup{min}}}+logT\textup{max}\{\frac{1}{\delta^2},\frac{16J}{(1-\delta)\pi_{\textup{min}}(\Delta^*)^2}\},$$
then the regret of $\epsilon$-ESs-JamNet-UCB-AILP satisfies:
\begin{itemize}
\item if $\rho=\bar{E}/T\neq Q_i$, then $R_{\epsilon-FirstALP}(T,\bar{E})=O(logT)$;
\item if $\rho=\bar{E}/T= Q_i$,then $R_{\epsilon-FirstALP}(T,\bar{E})=O(\sqrt{T})$.
\end{itemize}
\end{theorem}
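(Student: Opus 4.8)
The plan is to use the standard explore-then-exploit (\emph{$\epsilon$-first}) regret decomposition and then reduce the exploitation stage to the known-statistics analysis already carried out for ALP. Writing $\tau_e=\epsilon(T)$ for the exploration length, I would split
$$R_{\epsilon\text{-FirstALP}}(T,\bar{E})=R_{\mathrm{explore}}+R_{\mathrm{exploit}},$$
where $R_{\mathrm{explore}}$ is the regret incurred during the first $\tau_e$ rounds and $R_{\mathrm{exploit}}$ is the regret incurred while running ALP on the empirical means $\bar{u}_{k,j}$ in the remaining $T-\tau_e$ rounds. Since all harvested-energy rewards lie in $[0,1]$ and the normalized rewards $\eta_{k,j}$ are bounded, both the number of possibly-suboptimal exploration rounds and the budget diverted during exploration are $O(\tau_e)$, each worth only a bounded amount of reward; hence $R_{\mathrm{explore}}=O(\tau_e)$. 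The prescribed schedule has $\epsilon(T)=\Theta(\log T)$, giving $R_{\mathrm{explore}}=O(\log T)$ unconditionally, which is already the dominant term in the non-boundary case.

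For $R_{\mathrm{exploit}}$ I would condition on the high-probability ``good'' event $\mathcal{G}$ that, at the end of exploration, the empirical ordering of all ratios $\varepsilon_{k,j_1,j_2}$ (hence also of the $\eta_{k,j}$'s and the $u_{k,j}$'s) coincides with the true ordering. By Lemma~\ref{lemma 13}, $\mathbb{P}\{\mathcal{G}^{c}\}\le (4J+1)KT^{-2}$. On $\mathcal{G}^{c}$ I bound the regret by the trivial worst case $O(T)$, so its contribution to the expected regret is at most $O(T)\cdot(4J+1)KT^{-2}=O(1/T)$ and is absorbed. The crux of the reduction is that Algorithm~\ref{algorithm 4} and the threshold form of the optimal $\widetilde{\mathcal{LP}}'$ solution depend on the rewards \emph{only through these orderings}, not through their numerical values; hence on $\mathcal{G}$ the ALP run with estimated rewards produces exactly the same candidate sets $\mathcal{A}_k$, the same threshold index $\widetilde{i}(\cdot)$, and the same selection probabilities as the known-statistics ALP.

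It then remains to transfer the known-statistics guarantees. On $\mathcal{G}$ the exploitation phase is itself an ALP instance with effective horizon $\tau_0=T-\tau_e$ and remaining budget $b_0=\bar{E}-O(\log T)$ (exploration consumes at most $c_{\max}\tau_e$ energy), so its average budget is $b_0/\tau_0=\rho+O(\log T/T)$. When $\rho\neq Q_i$ for all $i$ there is a strictly positive gap $\min_i|\rho-Q_i|$ to the nearest boundary, so for $T$ large the $O(\log T/T)$ drift cannot move the threshold away from its value at $\rho$; invoking the budget concentration of Lemma~\ref{lemma 12} together with the off-boundary ALP result (the $O(1)$ bound) yields $R_{\mathrm{exploit}}=O(1)$, and the total is $O(\log T)$. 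When $\rho=Q_i$ the on-boundary ALP result gives $R_{\mathrm{exploit}}=O(\sqrt{T})$, which dominates the $O(\log T)$ exploration term, producing $O(\sqrt{T})$.

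The main obstacle is making the reduction on $\mathcal{G}$ fully rigorous: I must verify that \emph{every} branch of Algorithm~\ref{algorithm 4} and the threshold index $\widetilde{i}(\rho)$ are invariant under any reward perturbation that preserves the sign of each pairwise comparison, so that correct ordering really does imply identical behaviour of the estimated and true ALP. The second delicate point is controlling the interaction between the horizon/budget shift caused by exploration and the boundary classification of $\rho$: one needs the $O(\log T/T)$ drift of the average budget to stay below the boundary gap $\min\{\rho-Q_{\widetilde{i}},Q_{\widetilde{i}+1}-\rho\}$, which holds for all sufficiently large $T$ and is exactly where the non-boundary hypothesis enters.
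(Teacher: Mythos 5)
Your proposal is correct and follows essentially the same route as the paper's (sketch) proof: condition on the event that the $\varepsilon_{k,j_1,j_2}$'s are ranked correctly, reduce the exploitation phase to the known-statistics ALP analysis with the horizon and budget shifted by the $O(\log T)$ exploration cost, and absorb the complementary event via the $O(T^{-2})$ failure probability together with the boundedness of the per-slot harvested energy. Your write-up is in fact more explicit than the paper's about the order-invariance of Algorithm~4 and the budget-drift versus boundary-gap issue, but it is the same argument.
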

\begin{proof}
(Sketch) The key idea of proving this theorem is considering the event where the $\varepsilon_{k,j_1,j_2}$'s are ranked correctly and its negation. When the $\varepsilon_{k,j_1,j_2}$'s are ranked correctly, we can use the properties of the ALP algorithm with modification on the time horizon and energy budget (subtracting the time and energy budget in the exploration stage, which is $O(logT)$); otherwise, if the scheduler  obtains a wrong ranking results, the regret is bounded as $O(1)$ because the probability is $O(T^{-2})$ and the HE in each slot is bounded.
\end{proof}
\subsection{A Practical Implementation: Determine $\epsilon(T)$ without Prior Information}
\indent In Theorem  \ref{theorem 5}, the scheduler  requires the value of $\Delta^*$ (in fact $\Delta_{\textup{min}}^{(\varepsilon)}$ because  $\Delta_{\textup{min}}^{(c)}$ is known) to calculate $\epsilon(T)$. This is usually impractical since the expected energy rewards are unknown a \emph{priori}. Thus, without the knowledge of $\Delta_{\textup{min}}^{(\varepsilon)}$, we propose a Confidence Level Estimation (CLE) algorithm for deciding when to end the exploration stage.

Specifically, assume $\Delta_{\textup{min}}^{(\varepsilon)}>0$ and is unknown by the scheduler . In each slot of the exploration stage, the scheduler  tries to solve the problem ($\mathcal{LP}'_{\tau,b}$) with $u_{k,j}$ replaced by $\bar{u}_{k,j}$ using comparison, i.e., using Algorithm \ref{algorithm 4} and sorting the virtual jammers. For each comparison, the scheduler  tests the confidence level according th Algorithm \ref{algorithm 6}. If all comparisons pass the test, i.e., { \large{flagSucc = true} } for all comparisons, then the scheduler  ends the exploration stage and starts the exploitation stage.
\begin{algorithm}[h]
\caption{Confidence Level Estimation (CLE)}\label{algorithm 6}
\begin{algorithmic}[1]
\REQUIRE Time horizon $T$, estimate $\bar{\varepsilon}_{k_1,j_{11},j_{12}},\bar{\varepsilon}_{k_2,j_{21},j_{22}}$, number of executions $C_{k_1,j_{11}},C_{k_1, j_{12}},C_{k_2, j_{21}}$, and $C_{k_2, j_{22}}$;\
\ENSURE   {\large{flagSucc}};\
\quad $\Delta'=\frac{\Delta_{\textup{min}^{(c)}}(\bar{\varepsilon}_{k_1,j_{11},j_{12}}-\bar{\varepsilon}_{k_2,j_{21},j_{22}})}{2}$;\
\IF{$e^{-2(\Delta')^2\textup{min}\{C_{k_1,j_{11}},C_{k_1.j_{12}}\}}\leq T^{-2}\&e^{-2(\Delta')^2\textup{min}\{C_{k_2,j_{21}},C_{k_2.j_{22}}\}}\leq T^{-2}$}
\STATE {\large{flagSucc=true}};\
\ENDIF
\RETURN {\large{flagSucc}};\

\end{algorithmic}
\end{algorithm}

By similar arguments as in Theorem 4, we can show that the $\epsilon$-First policy with CLE will achieve $O(logT)$ regret except for the boundary cases, where it achieves $O(\sqrt{T})$ regret. On one hand, according to Hoeffding-Chernoff bound, if all comparisons pass the confidence level test, then with probability at least $1-KJ^2T^{-2}$, the algorithm obtains the correct rank and provide a right solution for the problem $(\mathcal{LP}'_{\tau,b})$. On the other hand, because $\Delta^*>0$, from the analysis in the previous section, we know that the exploration stage will end within $O(logT)$ rounds with high probability. Therefore, the expected regret is the same as that in the case with known $\Delta_{\textup{min}}^{(\epsilon)}$.

\section{Online Jammer Placement and Power Allocation: A Simple Algorithm and a Upper Bound}
In our online jammer placement model, we receive an unknown number of $J$ \emph{friendly jammer
placement requests} sequentially over time. Each jammer $1 \le j \le J$ targets at an eavesdropping location $p_e$ at $\mathfrak{F}$. Let
 the short notation $d_{jp_e(j)}$ denote the distance between the jammer $j$ and its jammed eavesdropping location  $p_e(j)$. We denote
 $\Delta = (\max_j d_{jp_e(j)})/(\min_j d_{jp_e(j)})$ as the \emph{distance ratio}. Further, if jammers are informed with a parameter
 $t_j$, which denotes the eavesdropping duration within the jamming scope. We $\Gamma = (\max_i t_i)/(\min_i t_i)$ as the
  \emph{duration ratio}. W.l.o.g., we let $\min_i t_i =1$ and $\max_i t_i= \Gamma$.   Jammers arrive sequentially over time and
  the goal is to accept the minimal number of requests to interfering all potential
  eavesdropper while making minimal noises to legitimate communications.

  For each jammer
  placement request, an online algorithm must decide whether to accept the request or deny it, and the decision
  can not be revoked. For an accepted jammer $j$ it needs to set a power level $P_j$ and a
  channel $f_j \in \{1,...,F\}$ to it to emit the interfering power.  In the following  we first analyze the spatial aspect of the problem and assume that eavesdropping behavior last forever, i.e., $t_j = \infty$. We begin by analyzing a simple online algorithm for the case of a single channel and any polynomial power assignment. Our analysis of the online algorithm introduces a number of critical observations that are used in later subsections.

\begin{figure}
\centering
\includegraphics[scale=.5]{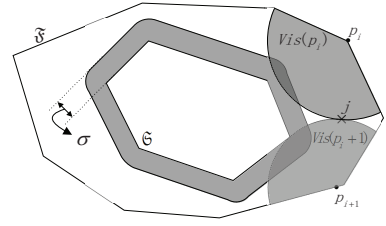}
\caption{Guarded safe distance $\sigma$ for online jammer placement.}
\label{fig:digraph}
\vspace{-.2cm}
\end{figure}


 The main idea of the algorithm is to accept a new jammer
  only if it keeps a $safe\  distance\  \sigma$ from every other previously accepted jammers to
  meet two goals: 1) the sum of cumulated interfering power of all jammers to any legitimate
  communications are small enough; 2) the maximized safe distance $\sigma$ make the number of
  jammers placed to be minimized. In particular, we accept incoming jammer $i$ only if
  $\min\{d_{j,p_s}, \forall p_s \in \partial\mathfrak{S}, \forall j \in J\} \geq \sigma$ and $\max\{d_{i,p_e(j)},d_{j,p_e(i)}\} \geq \sigma$ for every other previously accepted jammer $j\in J$.  We call this algorithm J{\footnotesize AM}-S{\footnotesize AFE}-D{\footnotesize ISTANCE}. There is a important
   tradeoff for the choice of $\sigma$ among EE of interfering power,  validity and competitive ratio. A larger $\sigma$
   means safeguarding a large scope of eavesdropping locations  with high interfering power and have resulted in minimized number of
   jammers to place, but the SIR
   constraints of legitimate communications might become violated. If $\sigma$ is too small, then more jammers
   are required to be placed and at some point the accumulated interference at an accepted jammer placement
   request can get too large and the SIR constraint of legitimate communications  becomes violated.

 We strive to devise the J{\footnotesize AM}-S{\footnotesize AFE}-D{\footnotesize ISTANCE} to
  make $\sigma$ as large as possible to ensure optimal competitive ratio as well as not to violate
   SIR constraints. On the one hand, we need to bound the interference on the
   edge of the fence $\mathfrak{S}$  at accepted jammer placement requests to construct a worst-case legitimate communication
   scenario. On the other hand,  we consider  an  accepted jammer $j$  block an eavesdropper $p_e(j)$ with certain distance $r_j$. In the following we show that for $r \in [0,1]$ the choice of
   \begin{equation}
   \sigma\! =\! \min\left\{\!2\Delta,  \max \left\{\!4  \Delta^r  \sqrt[\gamma]{\frac{ 72\delta_s}{\bar P (\gamma-2)}}, \Delta^{(1-r)} \sqrt[\gamma] {\frac{\tilde P}{\delta_e}} \right\} \!\right\}
   \label{eq:BD0}
   \end{equation}
      is sufficient to yield the Theorem 5 in the following. Denote the $\underline L = \min ||\mathfrak{S} - \mathfrak{F} ||$ as the
      minimal distance among $\mathfrak{S}$ and $\mathfrak{F}$. If the eavesdropping locations on the $\mathfrak{F}$ are unavailable,
      we need the additional condition that
     \begin{equation}
\quad\quad\quad\quad\quad\quad\quad\quad \underline L \ge (\sqrt 2 +2)\sigma
   \label{eq:BD90}
   \end{equation}
   to  block all eavesdropping points on $\mathfrak{F}$.

\begin{theorem}\label{theorem1}
In a single channel scenario, J{\footnotesize AM}-S{\footnotesize AFE}-D{\footnotesize ISTANCE} is $\Omega(\Delta)$-competitive for any polynomial
 interfering power assignment with $r\in[0,1]$.
\end{theorem}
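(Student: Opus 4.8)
The plan is to read the statement as a \emph{lower bound} on the competitive ratio of the specific algorithm J{\footnotesize AM}-S{\footnotesize AFE}-D{\footnotesize ISTANCE}: I must exhibit, for each distance ratio $\Delta$, an adversarial family of placement requests on which the number of jammers accepted by the algorithm exceeds that of the offline optimum by a factor $\Omega(\Delta)$, i.e. $ALG/OPT=\Omega(\Delta)$ for the minimization objective ``number of accepted requests''. Since the algorithm is deterministic and its safe distance $\sigma$ is pinned down by (\ref{eq:BD0}) once $\Delta$, $r$ and the physical constants $\delta_s,\delta_e,\bar P,\tilde P,\gamma$ are fixed, the adversary may tailor the instance to the known value of $\sigma$. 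The construction is driven by the observation already flagged as ``the main difficulty of the online scenario'': online the requests cannot be reordered by their length $d_{jp_e(j)}$, whereas offline the optimum may process the longest requests first and cover the fence with a few long-range jammers.

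First I would build the request sequence along a flat patch of the fence $\mathfrak{F}$ in two phases. In phase one I release $N$ \emph{minimum-length} jammers: jammer $j$ sits at distance $d_{jp_e(j)}=1$ from its own target $p_e(j)$, and the targets $p_e(1),\dots,p_e(N)$ are placed consecutively on $\mathfrak{F}$ with successive separation slightly larger than $\sigma$. The decisive acceptance computation is that for any two of them $d_{i,p_e(j)}\ge\|p_e(i)-p_e(j)\|-1\ge\sigma$, so the cross-distance $\max\{d_{i,p_e(j)},d_{j,p_e(i)}\}\ge\sigma$; hence the safe-distance test is passed by \emph{every} one of them and $ALG=N$. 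In phase two I append the \emph{maximum-length} requests at distance $\Delta$ that the offline solution actually prefers; these leave $ALG$ unchanged but certify how cheaply the whole target cluster can be covered.

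Next I would bound the offline optimum from below. Invoking the reduction recalled at the end of Section~II-C (jamming the eavesdroppers on $\mathfrak{F}$ suffices, and keeping the boundary of $\mathfrak{F}$ unjammed suffices for the storage), I argue that a single maximum-length jammer at distance $\Delta$, carrying the power dictated by the polynomial assignment, blocks the entire target segment while still respecting the legitimate-receiver constraint $\bm{\textup{SIR}}(J,p_s)>\delta_s$ of (\ref{eq:Cst1}); choosing the span of the target segment to match this jammer's effective reach then gives $OPT=O(1)$. Calibrating $N$ so that the segment length $\approx N\sigma$ equals that reach while the global length ratio is exactly $\Delta$ (lengths span $[1,\Delta]$) forces $N=\Omega(\Delta)$, whence $ALG/OPT=\Omega(\Delta)$.

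The main obstacle is the two-sided reach analysis, and it is where the dependence on $r$ must be tamed. On the $ALG$ side I must verify that all $N$ short jammers are accepted for the exact $\sigma=\sigma(\Delta,r)$ of (\ref{eq:BD0}), uniformly over $r\in[0,1]$ (including the regime where the cap $2\Delta$ binds); on the $OPT$ side I must certify that one long jammer legitimately covers a segment of length $\Omega(\Delta\,\sigma)$ without violating (\ref{eq:Cst1}), which ties the admissible reach to $\delta_s$, $\gamma$ and the polynomial power exponent. The crux is showing that the ratio $\mathrm{reach}/\sigma$ survives at $\Omega(\Delta)$ \emph{for every} $r$: a naive packing of $\sigma$-separated targets inside a fixed-size region yields only the weaker $r$-dependent count, so the construction must instead let the offline jammer's reach grow together with its power (hence with $\Delta$) at the same rate $\sigma$ does, and I expect most of the work to lie in choosing the patch geometry and power levels that make this hold simultaneously across the whole interval $r\in[0,1]$.
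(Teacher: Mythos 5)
Your proposal proves a different statement than the one the paper actually establishes here, and the direction matters. Despite the $\Omega$ in the theorem text, the paper's proof of this theorem is a performance-\emph{guarantee} argument for J\textsc{am}-S\textsc{afe}-D\textsc{istance}: it (i) proves \emph{validity} of the safe distance $\sigma$ in (\ref{eq:BD0}) by tiling the plane into sectors of side $\sigma/4$, grouping them into layers around an accepted jammer, invoking the technical sum bound of Lemma 6 to get $I < 2\Delta^{r\gamma}(4/\sigma)^{\gamma}\cdot 36/(\gamma-2)$ and hence the constraint (\ref{eq:BD1}) from $\bar P \ge \delta_s I$, combined with the eavesdropper-side requirement yielding the second lower bound on $\sigma$; and then (ii) bounds the ratio by a \emph{charging} argument: each accepted jammer blocks a region overestimated by a square of side $2\sigma$, and the Density Lemma (Lemma 7) caps the number of requests the optimum can accept there at $\frac{\delta_e 3^{\gamma}\bar L^{\lambda}}{\delta_s}\frac{\bar P}{\tilde P}(2\sigma+1)^2$, which is $\Theta(\Delta^{2\max\{r,1-r\}})\supseteq \Omega(\Delta)$ since $\sigma=\Omega(\Delta^{1/2})$ for $r\in[0,1]$. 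Your adversarial two-phase construction (accept many unit-length requests spaced just over $\sigma$, then reveal the long requests the optimum prefers) is instead the proof template of the paper's Theorem 9, the lower bound for \emph{every} deterministic algorithm: there the paper forces acceptance of the first request, derives a blocked area from the SIR constraints, and counts the requests the optimum packs into it. Nothing in your write-up reproduces either pillar of this theorem's proof: you never verify that $\sigma$ keeps the legitimate-receiver SIR valid under accumulated interference (the layered-sector computation), nor do you bound how many optimum requests fit per blocked region (the Density Lemma).

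Even read on its own terms as a lower bound, the step you explicitly defer is exactly where the construction fails without further ideas. With power $\Delta^{r\gamma}$ the jamming reach of the single long jammer scales like $\Delta^{r}\cdot(\delta_e/\tilde P)^{1/\gamma}\,d_{s(p_e)p_e}$, i.e.\ like $\Delta^{r} L$ where $L$ is the storage--fence separation; covering $N$ targets spaced $\sigma \sim \Delta^{\max\{r,1-r\}}$ apart with one jammer requires $L \gtrsim \Delta^{1+\max\{r,1-r\}-r}$, which at $r=0$ means $L\gtrsim \Delta^{2}$ --- the instance geometry must grow with $\Delta$, while with fixed geometry your count degenerates to the ``weaker $r$-dependent'' bound you name (as weak as $O(1)$ at $r=0$). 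Making $L$ scale is not obviously fatal --- condition (\ref{eq:BD90}) and the legitimate-side SIR for the heavy jammer both relax as $L$ grows --- but it has to be carried out uniformly over $r\in[0,1]$, and your proposal stops precisely at that point. So the attempt is both aimed at the converse of what the paper's proof establishes and incomplete at its self-identified crux.
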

\begin{proof}
We first show that J{\footnotesize AM}-S{\footnotesize AFE}-D{\footnotesize ISTANCE} is valid, i.e., for an accepted jammer $j$ the SIR constraint of any
locations at the edge of $\mathfrak{S}$ never becomes violated. In particular, we will underestimate the distances of already accepted jammers to overestimate the interference at any position $p_s, \forall p_s \in \partial\mathfrak{S}$. As such, even under the wrest conditions the SIR
 constraint at any potential legitimate receiver, i.e.,  $p_s$,  will remain valid.

 To estimate the interference at $p_s$, we have to calculate how many jammers may be placed at which distance.
 Using the fact $\underline L \geq 2\Delta \geq \sigma $ as shown in  Fig. 3.
 Then, it is straightforward to devise the  rule of the algorithm such  that any two different accepted jammers are at least a distance of $\sigma - \Delta \leq \sigma/2$ apart to block eavesdroppers on the $\mathfrak{F}$. We segment all of $\mathbb{R}^2$ into \emph{2}-dimensional squares with length $\sigma/4$ and we call it \emph{sectors}. The greatest distance within a sector is $\sigma \sqrt 2/4 = \sigma/2\sqrt 2 \leq \sigma/2$. Each sector can contain jammer  from at most one request, so there are at most two jammers in every sector.

 W.l.o.g., we assume that sectors are created such that the jammer $j$ lies in a corner point of $2^2$ sectors. We divide the set of sectors into $layers$. The first layer consists of the $2^2$ sectors incident to $j$. The second layer are all sectors not within the first layer but share at least a point with sector from the first layer, and so on. Hence, there are  $(2l)^2$ sectors from layers 1 through $l$, and their union is a large square of side length $2\emph{l}\sigma/4$ with $j$ in the center. Therefore, there are exactly $2^2(l^2-(l-1)^2)$ sectors in layer $l$.
\begin{figure}
\centering
\includegraphics[scale=.5]{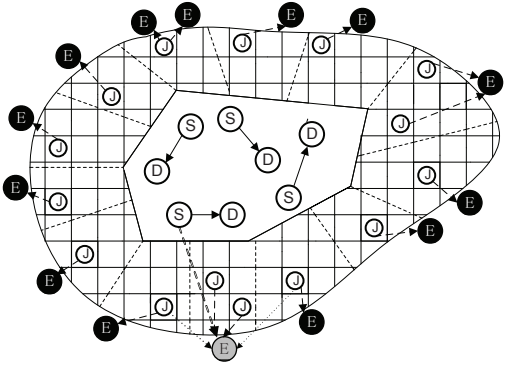}
\caption{Interference contributed from each segmented Sectors.}
\label{fig:digraph}
\vspace{-.2cm}
\end{figure}
Due to the algorithm there can be no sender at a distance smaller than $\sigma$ from $j$. The sector of smallest layer that is at a distance at least $\sigma$ from $j$ can be reached along the diagonal of the squares of that layer. There can be no jammer in all sectors from layers 1 through $l'$, where $l'$ is bounded by $\sigma \leq l'(\sigma/2\sqrt2)$, which yields $l' \geq 3$. For bounding the interference assume that in all sectors of layer $l\geq 3$ there are two jammers. Note that all jammer in sectors from a layer $l$ have a distance at least $(l-1)\sigma/4$ to $j$. To bound the interference that is created at $j$, we use the following technical lemma from \cite{Online13} under $\mathbb{R}^2$.
\end{proof}

\begin{lemma}  For $\gamma >2 \geq 1$ the following holds:$$2^2 \cdot \sum^{\infty}_{l=3}\frac{l^2-(l-1)^2}{(l-1)^\gamma}<\frac{36}{\gamma -2}.$$
\end{lemma}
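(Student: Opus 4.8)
The plan is to reduce the series to two standard $p$-series tails and bound each by an elementary integral comparison. First I would simplify the numerator using $l^2-(l-1)^2 = 2l-1$, so that the sum becomes $\sum_{l=3}^{\infty}\frac{2l-1}{(l-1)^{\gamma}}$. Reindexing by $m=l-1$ (so $l=3$ corresponds to $m=2$ and $2l-1 = 2m+1$) turns this into the cleaner form $\sum_{m=2}^{\infty}\frac{2m+1}{m^{\gamma}}$, which splits additively as $2\sum_{m=2}^{\infty}m^{-(\gamma-1)} + \sum_{m=2}^{\infty}m^{-\gamma}$.

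Next I would bound each tail by the corresponding integral. Since $x\mapsto x^{-s}$ is positive and decreasing for $s>0$, the comparison $\sum_{m=2}^{\infty}m^{-s}\le\int_{1}^{\infty}x^{-s}\,dx = \frac{1}{s-1}$ holds whenever $s>1$. The hypothesis $\gamma>2$ guarantees $\gamma-1>1$, so both tails converge: taking $s=\gamma-1$ gives the bound $\frac{1}{\gamma-2}$, and taking $s=\gamma$ gives $\frac{1}{\gamma-1}$. Restoring the prefactor $2^2=4$, the entire left-hand side is at most $\frac{8}{\gamma-2}+\frac{4}{\gamma-1}$.

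Finally, to match the claimed constant I would use the crude comparison $\gamma-1>\gamma-2$, hence $\frac{4}{\gamma-1}<\frac{4}{\gamma-2}$, which collapses the estimate to $\frac{12}{\gamma-2}$. Since $12<36$, the stated inequality $<\frac{36}{\gamma-2}$ follows with substantial slack; the generous constant $36$ simply absorbs the looseness that is convenient when the same bound is reinserted into the interference estimate of Theorem \ref{theorem1}.

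There is no genuine obstacle here — the argument is routine — so the only points demanding care are bookkeeping ones: verifying that the index shift keeps the lower limit at $m=2$ (equivalently, that one never accidentally reintroduces the divergent $m=1$ term), and checking that the integral comparison is invoked only on the region where $x^{-s}$ is decreasing, so that the inequality points in the correct direction. Both are immediate consequences of $\gamma>2$.
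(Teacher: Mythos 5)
Your proof is correct, and it is worth noting that the paper itself does not prove this lemma at all: it simply imports the statement by citing Fangh\"anel et al.\ \cite{Online13} and moves on. Your argument therefore supplies a self-contained elementary proof where the paper has none. The chain of steps checks out: $l^2-(l-1)^2=2l-1$, the reindexing $m=l-1$ correctly places the lower limit at $m=2$ with numerator $2m+1$, the split into $2\sum_{m\ge 2}m^{-(\gamma-1)}+\sum_{m\ge 2}m^{-\gamma}$ is exact, and the comparison $\sum_{m=2}^{\infty}m^{-s}\le\int_{1}^{\infty}x^{-s}\,dx=\tfrac{1}{s-1}$ is valid for $s>1$ (which $\gamma>2$ guarantees for both $s=\gamma-1$ and $s=\gamma$). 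The resulting bound $\tfrac{8}{\gamma-2}+\tfrac{4}{\gamma-1}<\tfrac{12}{\gamma-2}$ is in fact strictly sharper than the claimed $\tfrac{36}{\gamma-2}$; the slack in the stated constant is harmless since the lemma is only used to absorb constants in the interference bound of Theorem~\ref{theorem1}. One cosmetic remark: the hypothesis as printed, ``$\gamma>2\ge 1$,'' is almost certainly a specialization of a general-dimension statement ($\gamma>d\ge 1$ with $d=2$) from \cite{Online13}; your proof uses only $\gamma>2$, which is all that is needed here.
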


With Lemma 6 and set $P_j=d_{jp_e(j)}^{r\gamma}$, we bound the interference for
legitimate communication
\begin{equation*}
\begin{aligned} 
I\!\!=\!\! \sum_{j\in J}\!\! \frac{d_{jp_e(j)}^{r\gamma}}{d_{jp_s}^{\gamma}}\!\!< \!\! 2\Delta^{r\gamma}\sum_{l=3}^{\infty}\frac{2^2(l^2\!\!-\!\!(l-1)^2)}{((l-1)\sigma/4)^{\sigma}} \!\!<\!\!2\Delta^{r\gamma}(\frac{4}{\sigma})^{\gamma}\cdot \frac{36}{\gamma -2}.
\end{aligned}
\end{equation*}
To satisfy the SIR constraint at $p_s$, we let  $\bar P \geq \delta_s I$, i.e.,
$$2\delta_s \Delta^{r\gamma}\cdot(\frac{4}{\sigma})^{\gamma}\cdot \frac{36}{\gamma -2}\leq \bar P .$$
This yields a lower bound for the distance $\sigma$,
\begin{equation}
\quad\quad\quad\quad\quad\quad \sigma \geq  4 \cdot \Delta^r \cdot \sqrt[\gamma]{\frac{ 72\delta_s}{\bar P (\gamma-2)}},
\label{eq:BD1}
\end{equation}
which can be verified to hold for our choice of $\sigma$.

Then, to bound the SIR constraint at $p_e$, we use the fact that in the worst condition if there is no other jammer, a single jammer $j$ is enough to thwart the eavesdropper, i.e., $ d_{jp_e(j)}^{r\gamma} / d_{jp_e(j)}^{\gamma} \delta_e \geq {\tilde P/{  d_{s({p_e}){p_e}}^{  \gamma }}}.$ Note that
 $d_{jp_e(j)}^{r\gamma} / d_{jp_e(j)}^{\gamma} = d_{jp_e(j)}^{(r-1)\gamma} \ge \Delta^{(r-1)\gamma}$ and $  d_{s({p_e}){p_e}} \ge \underline L \ge \sigma$, we have
\begin{equation}
\quad\quad\quad\quad\quad\quad
{\Delta^{(r-1)\gamma} } \delta_e  \ge {\frac{\tilde P}{  \sigma^{\gamma }}}.
\IEEEyesnumber \label{eq:BD23}
\end{equation}
This yields another lower bound for $\sigma$
\begin{equation}
\quad\quad\quad\quad\quad\quad
\sigma \ge \Delta^{(1-r)} \sqrt[\gamma] {\frac{\tilde P}{\delta_e}}.
\IEEEyesnumber \label{eq:BD2}
\end{equation}
Combine results (\ref{eq:BD1}) and (\ref{eq:BD2}) yields  (\ref{eq:BD0}).
%

Moreover, when the locations of eavesdroppers are unavailable, every placed jammer is
necessary  to block all eavesdropping position on the intersections of its sector and the fence $\mathfrak{S}$.
Denote the furthest eavesdropping position within the sector as $p_e'$. Then we have
$d_{jp_e(j)}^{r\gamma} / d_{jp_e'}^{\gamma} \delta_e \ge d_{jp_e(j)}^{r\gamma} / (d_{jp_e}+d_{p_e'p_e})^{\gamma} \delta_e$. Due to
the size of the sector we have that $d_{p_e'p_e} \le \sqrt 2$. Also $d_{jp_e} \ge 1$, which
implies
\begin{equation}
\frac{d_{jp_e(j)}^{r\gamma}\delta_e}{ (d_{jp_e}+d_{p_e'p_e})^{\gamma}}\ge \frac{1}{(\sqrt 2+1)^\gamma} \frac{d_{jp_e(j)}^{r\gamma} }{ d_{jp_e}^{\gamma}}\delta_e  \ge
 \frac{ {\Delta^{(r-1)\gamma} } \delta_e}{(\sqrt 2+1)^\gamma}.
\IEEEyesnumber \label{eq:BD2}
\end{equation}
  On the other hand, use the condition   (\ref{eq:BD90}) that we upper bound
  ${\tilde P/{  d_{s({p_e'}){p_e'}}^{  \gamma }}} \le  {\tilde P/{  (\underline L -\sigma  )^{  \gamma }}}
  \le  \frac{\tilde P}{{(\sqrt 2+1)^\gamma}{  \sigma^{  \gamma }}}$. Use the fact that (\ref{eq:BD23}), we have
\begin{equation}
\quad\quad\quad\quad\quad  \frac{ {\Delta^{(r-1)\gamma} } \delta_e}{(\sqrt 2+1)^\gamma} \ge  \frac{\tilde P}{{(\sqrt 2+1)^\gamma}{  \sigma^{  \gamma }}}.
\IEEEyesnumber \label{eq:BD21}
\end{equation}

To bound the competitive ratio we need the following \emph{Density Lemma}, which is motivated by
 Lemma 3 in Andrews and Dinitz\cite{2009} to restrict interference both from senders and at receivers for
 any legitimate communication links. However, the placement of friendly jammer is
 interesting to be found as a different problem. In this case, we need to estimate the
 interference caused at the legitimate communication of a placed jammer by mapping its transmission power
 calculated from the SIR constraint at the eavesdropper. 

\begin{lemma}
(Density Lemma) Assume a sector $A$ with side-length $x\geq 1$ and any feasible jammer placement solution with arbitrary power assignment. There can be only $\frac{{{\delta _e}{3^\gamma }{{\bar L}^\lambda }}}{{{\delta _s}}}\frac{{\bar P}}{{\tilde P}}(x+1)^2$ jammer placement requests in $A$.
\end{lemma}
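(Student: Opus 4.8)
The plan is to bound the count $N$ of accepted requests whose jammers lie in $A$ by sandwiching the total jamming power emitted inside $A$ between two estimates: a per-request \emph{lower} bound coming from the eavesdropper SIR constraint (\ref{eq:Cst2}), and an \emph{upper} bound on the aggregate power coming from the legitimate-receiver SIR constraint (\ref{eq:Cst1}). Writing $P_1,\dots,P_N$ for the powers of these jammers, I would first produce the lower bound on each individual $P_i$. Since an accepted jammer $i$ must drive the SIR at its target $p_e(i)$ below $\delta_e$, and since its distance to that target satisfies $d_{ip_e(i)}\ge \min_j d_{jp_e(j)}=1$ while the legitimate signal seen there is $s(p_e(i))=\tilde P\,d_{s(p_e(i))p_e(i)}^{-\gamma}\ge \tilde P/\bar L^{\gamma}$, the blocking requirement forces $P_i\, d_{ip_e(i)}^{-\gamma}\ge s(p_e(i))/\delta_e$, hence $P_i\ge \tilde P/(\delta_e \bar L^{\gamma})$ after using $d_{ip_e(i)}\ge 1$. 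This is precisely the ``mapping'' of a jammer's transmission power from the SIR constraint at the eavesdropper referred to before the statement.

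Next I would cap the aggregate power $\sum_{i=1}^N P_i$ using the storage side. Fixing a worst-case legitimate receiver $p_s^0\in\partial\mathfrak{S}$ closest to $A$, every jammer in $A$ lies within a controlled distance of $p_s^0$ (on the order of $\bar L$ together with the side length $x$ of the sector), so that $d_{ip_s^0}^{-\gamma}$ is bounded below by an explicit geometric factor. Summing the contributions and invoking the feasibility condition $\sum_{i} P_i\, d_{ip_s^0}^{-\gamma} < \bar P/\delta_s$ gives $\sum_{i=1}^N P_i \le C\,\bar P/\delta_s$, where the constant $C$ collects the distance estimates. The factor $3^\gamma$ enters here when the distance from a jammer in $A$ to the reference receiver is compared with its distance to the blocked eavesdropper: these differ by at most a factor $3$ by the triangle inequality together with $d_{p_e'p_e}\le\sqrt2$ and $d_{jp_e}\ge1$, exactly the estimate used just above in the proof of Theorem \ref{theorem1}. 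The $(x+1)^2$ factor then arises as the area of the sector enlarged by the unit minimum link length, which controls how the $N$ distinct targets can be packed around $A$.

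Combining the two estimates yields $N\le (\sum_i P_i)/\min_i P_i \le \frac{\delta_e \bar L^{\gamma}}{\tilde P}\cdot C\,\frac{\bar P}{\delta_s}$, and absorbing the geometric constants into $3^\gamma \bar L^{\lambda}(x+1)^2$ produces the claimed bound $\frac{\delta_e\, 3^\gamma \bar L^{\lambda}}{\delta_s}\frac{\bar P}{\tilde P}(x+1)^2$.

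The main obstacle I anticipate is reconciling the per-request power floor with the phrase ``arbitrary power assignment'': if blocking were allowed to be genuinely \emph{collective}, the individual powers could be made arbitrarily small and no per-request lower bound would survive. The step $P_i\ge \tilde P/(\delta_e\bar L^\gamma)$ therefore has to be justified from the structure of an accepted request — namely that each request must by itself account for the interference deficit at its own target, so that the extremal (densest) feasible configuration is exactly the one in which every jammer just barely blocks its own eavesdropper. Making this reduction rigorous, while simultaneously keeping the two competing distance scales under control — the short jammer-to-target distance that yields the power floor and the long jammer-to-receiver distance that yields the power cap, whose ratio must stay bounded by the constant $3$ — is where the real work lies; the packing factor $(x+1)^2$ then follows from a routine sector-tiling/area argument.
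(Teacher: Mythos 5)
Your overall strategy matches the paper's: each accepted jammer in $A$ is forced, by the eavesdropper constraint (\ref{eq:Cst2}), to deposit at least a fixed amount of interference at a nearby legitimate receiver on $\partial\mathfrak{S}$, while the storage constraint (\ref{eq:Cst1}) caps the total interference there at $\bar P/\delta_s$; dividing the cap by the floor gives the count, and the $(x+1)^2$ comes from tiling by unit sectors. The paper organizes the floor slightly differently: it works with the received jamming strength $\bar p = P_j/d_{jp_e(j)}^{\gamma}$ at the target and converts it into a contribution at $s(p_e)$ via the triangle inequality $d_{js(p_e)}\le d_{p_e(j)s(p_e)}+d_{jp_e(j)}\le 3\,d_{jp_e(j)}$ — this, not the $d_{p_e'p_e}\le\sqrt 2$ estimate you cite (which belongs to the unknown-eavesdropper-location argument and produces $(\sqrt2+1)^{\gamma}$, not $3^{\gamma}$), is where the factor $3^{\gamma}$ comes from. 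That difference is cosmetic; the accounting is the same.

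The genuine gap is the one you flag yourself and then leave open: the per-request power floor under an \emph{arbitrary} power assignment. Your proposed justification ("each request must by itself account for the interference deficit at its own target") is a modelling assumption, not a derivation, and as you correctly observe it collapses if blocking is collective, in which case $\min_i P_i$ has no useful lower bound and your chain $N\le(\sum_i P_i)/\min_i P_i$ proves nothing. The paper closes exactly this hole with a normalization argument: first artificially reduce every jammer's power so that each jamming link delivers precisely the minimal admissible received strength $\bar p$ at its target; in this canonical configuration every jammer individually meets the floor, so each contributes at least $\frac{1}{\delta_e 3^{\gamma}}\frac{\tilde P}{\bar L^{\lambda}}$ at the legitimate receiver and the count $\frac{\delta_e 3^{\gamma}\bar L^{\lambda}}{\delta_s}\frac{\bar P}{\tilde P}$ follows; then restore the original powers and note that increasing powers can only worsen the SIR at the receivers that were already at the threshold, so the violation persists. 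Some such reduction is indispensable for the lemma as stated, and it is the one non-routine step your proposal defers. (A smaller looseness, shared with the paper, is that the interference must be accumulated at a single receiver point even though each jammer's natural reference $s(p_e(j))$ varies with its target; this is controlled only because all targets of jammers in a unit sector are within $O(1)$ of one another.)
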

\begin{proof}
We first assume $x=1$ and consider the number of jammers in section $A$. At first, the interfering power
 receiving by from a jammer $j$ to its targeted eavesdropper $p_e(j)$ is a constant $\bar p = {P_j}/{d_{jp_e(j)}^\gamma}$ such that $\bar p \delta_e \ge
  {\tilde P/{  d_{s({p_e}){p_e}}^{  \gamma }}}$  for any jammer placement request $j$ within $A$. Consider the
  interfering power contributed to the legitimate communication by the same jammer $j$. Due to the fact that  $\max d_{j{p_e}p_e(j)}\le 2 \min d_{js(p_e)}$. Also $d_{jp_e(j)}\geq 1$, which implies
  \begin{equation}
  \begin{array}{l}
  \frac{{{P_j}}}{{d_{js({p_e})}^\gamma }} \ge \frac{{{P_j}}}{{{{\left( {d_{j{p_e}s({p_e})}^{} + d_{j{p_e}(j)}^{}} \right)}^\gamma }}}
  \ge \frac{1}{{{{(1 + 2)}^\gamma }}}\bar p \\
   \quad\quad\quad \ge \frac{1}{{{\delta _e}{3^\gamma }}}\frac{{\tilde P}}{{d_{s({p_e}){p_e}(j)}^\gamma }} \ge \frac{1}{{{\delta _e}{3^\gamma }}}\frac{{\tilde P}}{{{{\bar L}^\lambda }}},
  \end{array}
  \end{equation}
where we have the $\bar L = \max ||\mathfrak{S} - \mathfrak{F} ||$.  Thus, if more than $\frac{{{\delta _e}{3^\gamma }{{\bar L}^\lambda }}}{{{\delta _s}}}\frac{{\bar P}}{{\tilde P}}$ such placement request are present, the SIR constraint for the legitimate communication is violated. Now consider the arbitrary power allocation strategy. If we artificially reduce powers such that all the jamming links to $\mathfrak{F}$ experience a minimal signal strength $\bar{p}$, and then increase powers to their original value. The increase lowers SIR for the jammers that continue to have a signal strength of $\bar{p}$. Hence, if more than $\frac{{{\delta _e}{3^\gamma }{{\bar L}^\lambda }}}{{{\delta _s}}}\frac{{\bar P}}{{\tilde P}}$ jammers are present in $A$, \emph{at least one} SINR constraint is violated.
\end{proof}

 The density lemma allows a simple way to bound the number of jammers the optimum solution can accept in the blocked area. First consider a jammer $j$ of a placement request accepted by  J{\footnotesize AM}-S{\footnotesize AFE}-D{\footnotesize ISTANCE}. The jammer blocks a square of radius $\sigma$
 for eavesdroppers and limit the placement of other requests to reduce the interference to legitimate communications. We overestimate its size by a sector of side-length $2\sigma$ centered at $j$. By the density lemma, the optimum solution can accept at most $\frac{{{\delta _e}{3^\gamma }{{\bar L}^\lambda }}}{{{\delta _s}}}\frac{{\bar P}}{{\tilde P}}(2\sigma+1)^2$ jammers, which is $\Omega(\Delta)$ according to    (\ref{eq:BD0}) ($\sigma=\Omega(\Delta^{1/2})$) for fixed $\delta_s, \delta_e$ and $\gamma$. Finally, note that $\sigma$ is chosen to maximize the competitive ratio and does not optimize its involved constants.

  In the next, we use similar arguments to show a result for any other polynomial power assignment. As safe distance we pick $\sigma^+ = \Delta^r \cdot \sigma$ if $r>1$, and $\sigma^- = \Delta^{1-r}\cdot \sigma$ if $r < 0$.

\begin{figure}
\centering
\includegraphics[scale=.5]{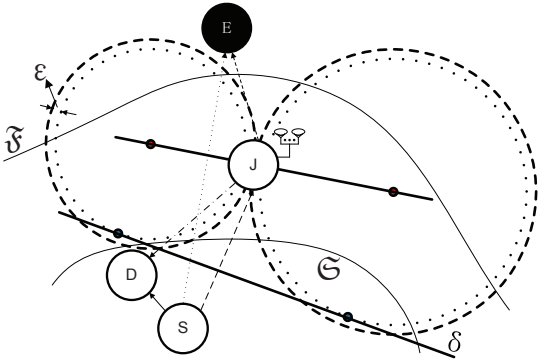}
\caption{$\epsilon$-approximation of power assignment without priori of exact eavesdropper locations.  }
\label{fig:digraph}
\vspace{-.2cm}
\end{figure}

\begin{corollary}
 J{\footnotesize AM}-S{\footnotesize AFE}-D{\footnotesize ISTANCE} is $\Omega(\Delta^{2\cdot max\{r,1-r\}})$-competitive for a polynomial power assignment with $r\notin(0,1)$ and a single channel.
\end{corollary}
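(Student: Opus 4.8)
The plan is to re-run the validity-and-density argument of Theorem~\ref{theorem1} in the two regimes $r>1$ and $r<0$ separately, the boundary values $r\in\{0,1\}$ already being settled there. The governing observation is that for a polynomial assignment $P_j=d_{jp_e(j)}^{r\gamma}$, outside the interval $(0,1)$ exactly one of the two SIR requirements becomes dominant, so only one of the two lower bounds on the safe distance is binding; we therefore inflate the safe distance anisotropically to $\sigma^{+}=\Delta^{r}\sigma$ for $r>1$ and $\sigma^{-}=\Delta^{1-r}\sigma$ for $r<0$, where $\sigma$ now denotes only the constant factor appearing in \eqref{eq:BD1} (resp. in the eavesdropper bound).

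First I would treat $r>1$. Here $P_j=d_{jp_e(j)}^{r\gamma}\le\Delta^{r\gamma}$ is largest for the most distant eavesdroppers, so the legitimate--receiver constraint \eqref{eq:BD1} is binding while the eavesdropper constraint only forces $\sigma=\Omega(1)$. Choosing $\sigma^{+}=\Delta^{r}\sigma$, I would repeat the sector/layer decomposition verbatim: partition $\mathbb{R}^2$ into sectors of side $\sigma^{+}/4$, note at most two accepted jammers per sector, and invoke Lemma~6 to bound the accumulated interference at any $p_s\in\partial\mathfrak{S}$ by $2\Delta^{r\gamma}(4/\sigma^{+})^{\gamma}\cdot 36/(\gamma-2)$. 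Imposing $\bar P\ge\delta_s I$ reproduces the requirement $\sigma^{+}\ge 4\Delta^{r}\sqrt[\gamma]{72\delta_s/(\bar P(\gamma-2))}$, which $\sigma^{+}$ meets by construction; the eavesdropper bound \eqref{eq:BD23} is then automatic since $\sigma^{+}=\Omega(\Delta^{r})\gg\sqrt[\gamma]{\tilde P/\delta_e}$.

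For $r<0$ the roles reverse: $P_j\le 1$ so the interference at legitimate receivers is trivially controlled, but a jammer far from its target now emits little power, so the eavesdropper constraint \eqref{eq:BD23} is binding. Taking $\sigma^{-}=\Delta^{1-r}\sigma$, I would check that a single jammer still thwarts its target in the worst case: since $(r-1)\gamma<0$ gives $d_{jp_e(j)}^{(r-1)\gamma}\ge\Delta^{(r-1)\gamma}$ at $d_{jp_e(j)}=\Delta$, and $d_{s(p_e)p_e}\ge\underline L\ge\sigma^{-}$ by \eqref{eq:BD90}, the requirement $\tilde P/(\sigma^{-})^{\gamma}\le\Delta^{(r-1)\gamma}\delta_e$ forces exactly $\sigma^{-}\ge\Delta^{1-r}\sqrt[\gamma]{\tilde P/\delta_e}$, again met by $\sigma^{-}$.

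In both regimes the ratio follows from the Density Lemma applied to the sector of side $2\sigma^{\pm}$ that each accepted jammer blocks: exactly as in Theorem~\ref{theorem1}, the optimum can accept at most $\frac{\delta_e 3^\gamma \bar L^\lambda}{\delta_s}\frac{\bar P}{\tilde P}(2\sigma^{\pm}+1)^2=\Omega((\sigma^{\pm})^2)$ requests there, giving competitive ratio $\Omega(\Delta^{2r})$ for $r>1$ and $\Omega(\Delta^{2(1-r)})$ for $r<0$, which unify to $\Omega(\Delta^{2\max\{r,1-r\}})$. The main obstacle is the validity step: one must confirm that in each regime the non--binding SIR constraint is genuinely dominated under the enlarged safe distance, and that the geometric interference sum of Theorem~\ref{theorem1} transfers unchanged after rescaling to $\sigma^{\pm}$---in particular that the ``no jammer within three layers'' estimate $l'\ge 3$ and Lemma~6 still bound the power--weighted interference once the factor $\Delta^{r\gamma}$ (for $r>1$) is introduced, and that for $r<0$ the small emitted powers do not compromise coverage despite the diminished $P_j$.
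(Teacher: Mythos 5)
Your proposal follows essentially the same route as the paper: split into the regimes $r>1$ and $r<0$, rescale the safe distance to $\sigma^{+}=\Delta^{r}\sigma$ resp. $\sigma^{-}=\Delta^{1-r}\sigma$, re-run the sector/layer interference bound of Theorem~\ref{theorem1} with Lemma~6 to confirm that in each regime only one of the two SIR-derived lower bounds on the safe distance carries a $\Delta$-dependence (the legitimate-receiver bound for $r>1$, the eavesdropper bound for $r<0$), and then invoke the Density Lemma on a blocked sector of side $O(\sigma^{\pm})$ to get $\Omega((\sigma^{\pm})^{2})=\Omega(\Delta^{2\max\{r,1-r\}})$. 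This is exactly the paper's argument, so the proposal is correct and not a different approach.
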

\begin{proof}
In the case $r>1$ we note for validity of the algorithm that the interference at an accepted jammer $j$ is again bounded by
\begin{equation*}
\begin{aligned} 
I\!\!=\!\! \sum_{j\in J}\!\! \frac{d_{jp_e(j)}^{r\gamma}}{d_{jp_s}^{\gamma}}\leq\Delta^{r\gamma} \sum_{j\in J} \frac{1}{d_{jp_s}^{\gamma}} <&2\Delta^{r\gamma}\cdot(\frac{4}{\sigma^+})^{\gamma}\cdot\frac{36}{(\gamma - 2)}.
\end{aligned}
\end{equation*}
\indent The SIR constraint now requires that   $\bar P \geq \delta_s I$ at $p_s$. This yields a lower bound of
\begin{equation}
\quad\quad\quad\quad
\sigma^+\geq 4 \cdot \Delta^r \cdot \sqrt[\gamma]{\frac{ 72\delta_s}{\bar P (\gamma-2)}}.
\IEEEyesnumber \label{eq:BD122}
\end{equation}
\indent If $r<0$, then the interference is maximized with requests of length 1 in each sector. The interference is thus bounded by
$$I\!\!=\!\! \sum_{j\in J}\!\! \frac{d_{jp_e(j)}^{r\gamma}}{d_{jp_s}^{\gamma}}\leq\Delta^{r\gamma} \sum_{j\in J} \frac{1}{d_{jp_s}^{\gamma}} <2\cdot(\frac{4}{\sigma^-})^{\gamma}\cdot \frac{36}{\gamma-d}.$$

\indent The SIR constraint now requires that   $\bar P \geq \delta_s I$ at $p_s$. This yields a lower bound
\begin{equation}
\quad\quad\quad\quad\quad\quad
\sigma^-\geq   4 \cdot \sqrt[\gamma]{\frac{ 72\delta_s}{\bar P (\gamma-2)}}.
\IEEEyesnumber \label{eq:BD222}
\end{equation}

Similarly, we need to bound the SIR constraint at $p_e$, we use the fact that in the worst condition if there is no other jammer, a single jammer $j$ is enough to thwart the eavesdropper. In the case $r>1$, note that
 $d_{jp_e(j)}^{r\gamma} / d_{jp_e(j)}^{\gamma} \delta_e  = d_{jp_e(j)}^{(r-1)\gamma} \delta_e  \ge \Delta^{(r-1)\gamma} \delta_e  \ge \delta_e  \geq {\tilde P/{  d_{s({p_e}){p_e}}^{  \gamma }}}$ and $  d_{s({p_e}){p_e}} \ge \underline L \ge \sigma$, we have
$\delta_e  \ge {\frac{\tilde P}{  {(\sigma^+)}^{\gamma }}}.
$
This yields another lower bound for $\sigma^+$
\begin{equation}
\quad\quad\quad\quad\quad\quad\quad\quad
\sigma^+ \ge \sqrt[\gamma] {\frac{\tilde P}{\delta_e}}.
\IEEEyesnumber \label{eq:BD322}
\end{equation}
If $r <0$, we have
 $d_{jp_e(j)}^{r\gamma} / d_{jp_e(j)}^{\gamma} \delta_e  = d_{jp_e(j)}^{(r-1)\gamma} \delta_e  \ge \Delta^{(r-1)\gamma} \delta_e     \geq {\tilde P/{  d_{s({p_e}){p_e}}^{  \gamma }}}$.
This yields another lower bound for $\sigma^-$
\begin{equation}
\quad\quad\quad\quad\quad\quad
\sigma^- \ge \Delta^{(1-r)} \sqrt[\gamma] {\frac{\tilde P}{\delta_e}}.
\IEEEyesnumber \label{eq:BD422}
\end{equation}
Combine results  (\ref{eq:BD122}),  (\ref{eq:BD222}) ,  (\ref{eq:BD322}) and (\ref{eq:BD422}), the corollary follows.
\end{proof}

Comparing the results in  Theorem 5 and Corollary 8, it shows that the competitive ratio of  J{\footnotesize AM}-S{\footnotesize AFE}-D{\footnotesize ISTANCE} is asymptotically optimal  for polynomial power assignments with $r\in(0,1)$. This includes both the uniform and linear power assignment. Next, we bound the competitive ratio for any deterministic online jammer placement algorithm using polynomial power assignments. This can be generalized to a lower bound for any distance-based power assignment.
\begin{theorem}
Every deterministic online jammer placement algorithm using polynomial power assignments has a competitive ratio (1) $O(\Delta)$-competitive  using distance-based power assignments.
\end{theorem}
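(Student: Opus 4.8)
The plan is to prove the $O(\Delta)$ bound \emph{constructively}, by exhibiting and analysing the online algorithm that realises any admissible power rule, rather than by constructing an adversary. Concretely, I take J{\footnotesize AM}-S{\footnotesize AFE}-D{\footnotesize ISTANCE} equipped with the given power assignment and bound the number of placements the offline optimum can make against the number the algorithm makes. The first step is a reduction: by the analysis of J{\footnotesize AM}-S{\footnotesize AFE}-D{\footnotesize ISTANCE} (Theorem~\ref{theorem1}) together with the preceding corollary, the monomial rule $P_j=d_{jp_e(j)}^{r\gamma}$ forces the safe distance $\sigma=\Theta(\Delta^{\max\{r,1-r\}})$, hence a packing factor $\sigma^2=\Theta(\Delta^{2\max\{r,1-r\}})$, which is minimised at the square-root tuning $r=1/2$ where it is exactly $\Theta(\Delta)$. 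So the task is to show (i) that this $\Theta(\Delta)$ factor is in fact an upper bound on the competitive ratio at the optimal tuning, and (ii) that the same $O(\Delta)$ guarantee survives for an arbitrary distance-based rule $P_j=f(d_{jp_e(j)})$.

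For feasibility I import the validity part of Theorem~\ref{theorem1} verbatim: with $\sigma$ chosen as in (\ref{eq:BD0}), the summed jamming interference at every $p_s\in\partial\mathfrak{S}$ is controlled by the convergent geometric series of the technical series lemma, so the legitimate constraint $\delta_s$ is met, while a single accepted jammer at distance at least $\sigma$ from its target already drives the eavesdropper SIR below $\delta_e$. Thus the accepted set is a feasible jamming configuration, and no new work is needed on this front.

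The heart of the upper bound is a charging argument through the Density Lemma. I would first observe the \emph{maximality} property of the safe-distance rule: the algorithm rejects an incoming request $i$ only because some already-accepted jammer $j$ satisfies $\max\{d_{i,p_e(j)},d_{j,p_e(i)}\}<\sigma$, so that $i$ (or its target) lies in the side-$2\sigma$ sector centred at $j$. Consequently every request accepted by the optimum — accepted or not by the algorithm — falls inside the sector of some accepted jammer. The Density Lemma then caps the number of jammers of \emph{any} feasible solution inside one such sector at $\tfrac{\delta_e 3^\gamma \bar L^{\lambda}}{\delta_s}\tfrac{\bar P}{\tilde P}(2\sigma+1)^2$, which is $O(\sigma^2)=O(\Delta)$ for $\sigma=\Theta(\Delta^{1/2})$. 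Summing this per-sector cap over the algorithm's accepted jammers yields $\mathrm{OPT}\le O(\Delta)\cdot\mathrm{ALG}$, i.e. a competitive ratio of $O(\Delta)$. Since the Density Lemma is proved for an \emph{arbitrary} power assignment, the optimum side transfers immediately to any distance-based $f$; for the algorithm side I sandwich $f$ between two monomials on the normalised range $[1,\Delta]$ of relevant distances, so the optimally tuned $\sigma$ remains $\Theta(\Delta^{1/2})$ and the packing factor stays $O(\Delta)$.

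The main obstacle is the joint calibration of $\sigma$ in the charging step. Feasibility pushes $\sigma$ \emph{down} (so accumulated interference at $p_s$ stays below $\delta_s$), while the maximality/covering property pushes $\sigma$ \emph{up} (so a rejected request is genuinely blocked by its charging jammer); reconciling these forces the value in (\ref{eq:BD0}) and pins the optimal exponent at $1/2$. The genuinely delicate point is establishing the covering claim itself — that every optimum request lands in some accepted jammer's sector — because a request's jammer position and its target can be up to $\Delta$ apart, so ``closeness of targets'' does not automatically translate into closeness of jammer positions; this is where the geometry of $\mathfrak{S}$, $\mathfrak{F}$ and the guarded distance must be used to keep the charged sector of side $2\sigma$ large enough to absorb the optimum's placements. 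Once that covering is secured, the remaining calculation that the monomial sandwich preserves both the convergent interference series and the blocking radius, so that the square-root tuning gives $\Theta(\Delta^{1/2})$ rather than a larger power of $\Delta$, is routine.
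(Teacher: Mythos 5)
Your proposal proves the wrong statement. The theorem quantifies over \emph{every} deterministic online jammer placement algorithm, and the paper's proof is accordingly an \emph{adversarial construction}, not an algorithm analysis: the adversary issues a first request of length $\Delta$, which any deterministic algorithm must accept (if it rejects, the adversary stops and the ratio is unbounded); the SIR constraints under the given power rule then force a blocked region around that accepted jammer in which the algorithm can accept no further request, while the optimum declines the long request and instead packs $\Omega(\Delta)$ short requests into the blocked region; repeating the gadget at large mutual separations removes the additive constant. The extension to arbitrary distance-based assignments in the paper is exactly the observation that the instance uses only lengths $1$ and $\Delta$, so a rule $\phi$ behaves like the monomial with effective exponent $r=\log\phi(\Delta)/(\gamma\log\Delta)$, and the blocked area is $\Omega(\Delta)$ no matter what power is assigned to the short requests. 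Your constructive route---running J{\footnotesize AM}-S{\footnotesize AFE}-D{\footnotesize ISTANCE} and charging the optimum's requests to the algorithm's sectors via the Density Lemma---can only ever bound the competitive ratio of that one algorithm; it is structurally incapable of establishing a claim about every deterministic algorithm, and in substance it re-derives the content of Theorem~\ref{theorem1} and its corollary (where the safe-distance calibration and the Density-Lemma charging already appear), not the present theorem.

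Even taken on its own terms as an upper bound for the safe-distance algorithm, the step you yourself flag as delicate is a real hole, not a routine verification: rejection of a request $i$ tells you only that $\max\{d_{i,p_e(j)},d_{j,p_e(i)}\}<\sigma$ for some accepted $j$, i.e.\ closeness between one party's jammer and the \emph{other party's target}; since a jammer and its own target may be up to $\Delta$ apart, this does not place the optimum's jammer (nor its target) inside a side-$2\sigma$ sector centred at $j$, so the covering claim behind your inequality $\mathrm{OPT}\le O(\Delta)\cdot\mathrm{ALG}$ is never established. To prove the stated theorem you should discard the charging framework entirely and build the adversary: force the acceptance of one length-$\Delta$ request, compute from the SIR constraints the region it blocks for the online algorithm under the given (polynomial, then sandwiched distance-based) power rule, and count the unit-length requests the optimum accepts inside that region.
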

\begin{proof}
The main ingredient in the proof is that every deterministic online jammer placement algorithm has to accept the first jammer that arrives, otherwise it risks having an unbounded competitive ratio. Note that the jamemrs can be replaced over time, we can repeat the following instance sufficiently often and keep a sufficiently large distance between the instances. In this way we can neglect the constant $a$ from the competitive ratio.\\
\indent Since all jammer placements are directed (to eavesdroppers) and use polynomial power assignment. Let the first request have length $\Delta$. From the SIR constraints,  we bound the minimum distance every other successful jammer request has to keep to the fence $\mathfrak{F}$.  This yields a blocked area in which the online jammer placement  algorithm is not able to accept any  request. We then count the maximum number of requests that can be placed into this 
region, where the optimum solution can accept simultaneously.

 To extend the previous arguments to arbitrary distance-based power assignments, we observe that the previous lower bound uses only requests of length 1 and $\Delta$. Let $\phi$ be the function of the distance-based power assignment, then $\phi(\Delta)$ is the power of the first request. The lower bound for this power assignment behaves exactly as for a polynomial assignment with $r = (log\phi(\Delta))/(\alpha log\Delta)$. \\
\indent Note that when a power assignment is not distance-based, it might assign different powers to small requests based on whether they are near the sender or the receiver of the first request. This is not helpful since the jammer have a direct interfering link to the eavesdroppers. In this case,  we create the same instance using only undirected requests. Then we get a blocked area of at least $\Omega(\Delta)$ for any polynomial power assignment around both points of the first request. Using the normalization of powers as before we observe that there is a blocked area of size $\Omega(\Delta)$ for any small request, \emph{no matter which power we assign to it}. This proves the theorem.
\end{proof}

\begin{figure*}\label{figure 5}
\begin{minipage}[ht]{0.3\linewidth}
\centering
 \subfigure[Storage/fence with candidate locations (small dots) and solution of Jamming-LP(Large dots)]{
 \label{fig:subfig:c}
\includegraphics[width=6cm]{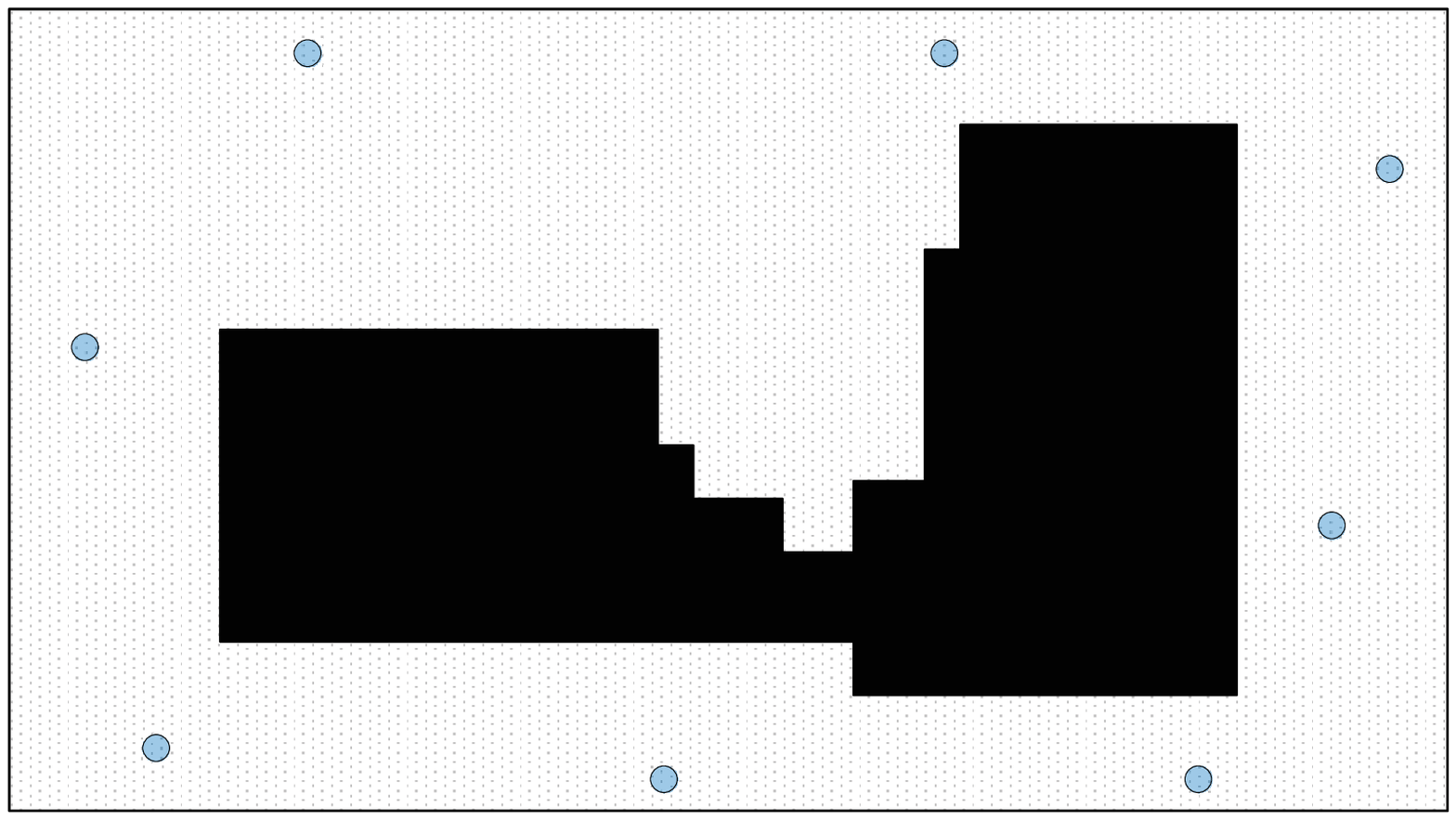}}
\end{minipage}%
\hfill
\begin{minipage}[ht]{0.35\linewidth}
\centering
 \subfigure[J{\footnotesize AM}-S{\footnotesize AFE}D{\footnotesize IST}-P{\footnotesize OWER}.]{
 \label{fig:subfig:a}
\includegraphics[width=5.5cm]{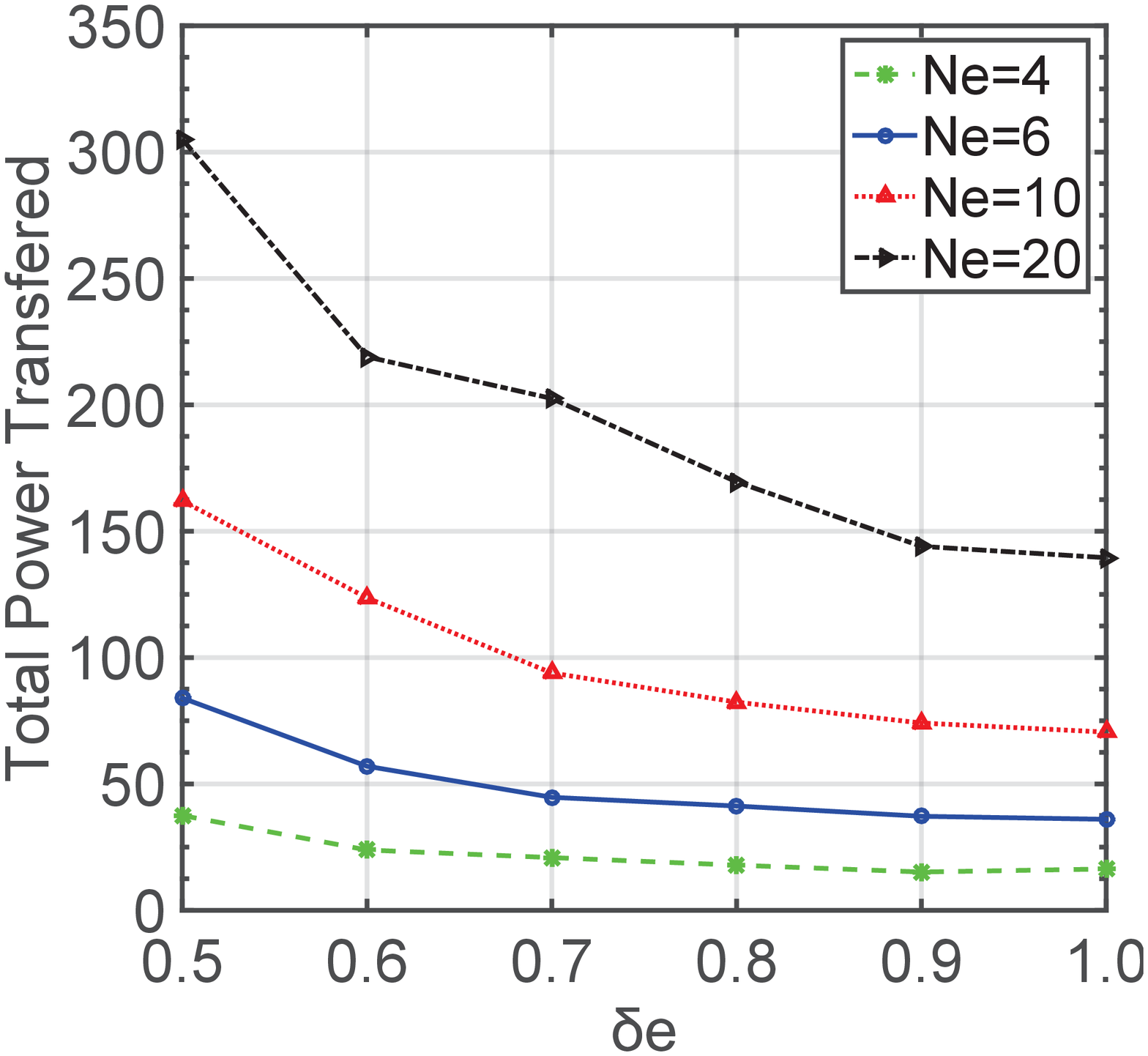}}
\end{minipage}
\hfill
\hspace{-.9cm}
\begin{minipage}[ht]{0.3\linewidth}
\centering
 \subfigure[J{\footnotesize AM}-L{\footnotesize IFE}M{\footnotesize AX}.]{
 \label{fig:subfig:b}
\includegraphics[width=5.6cm]{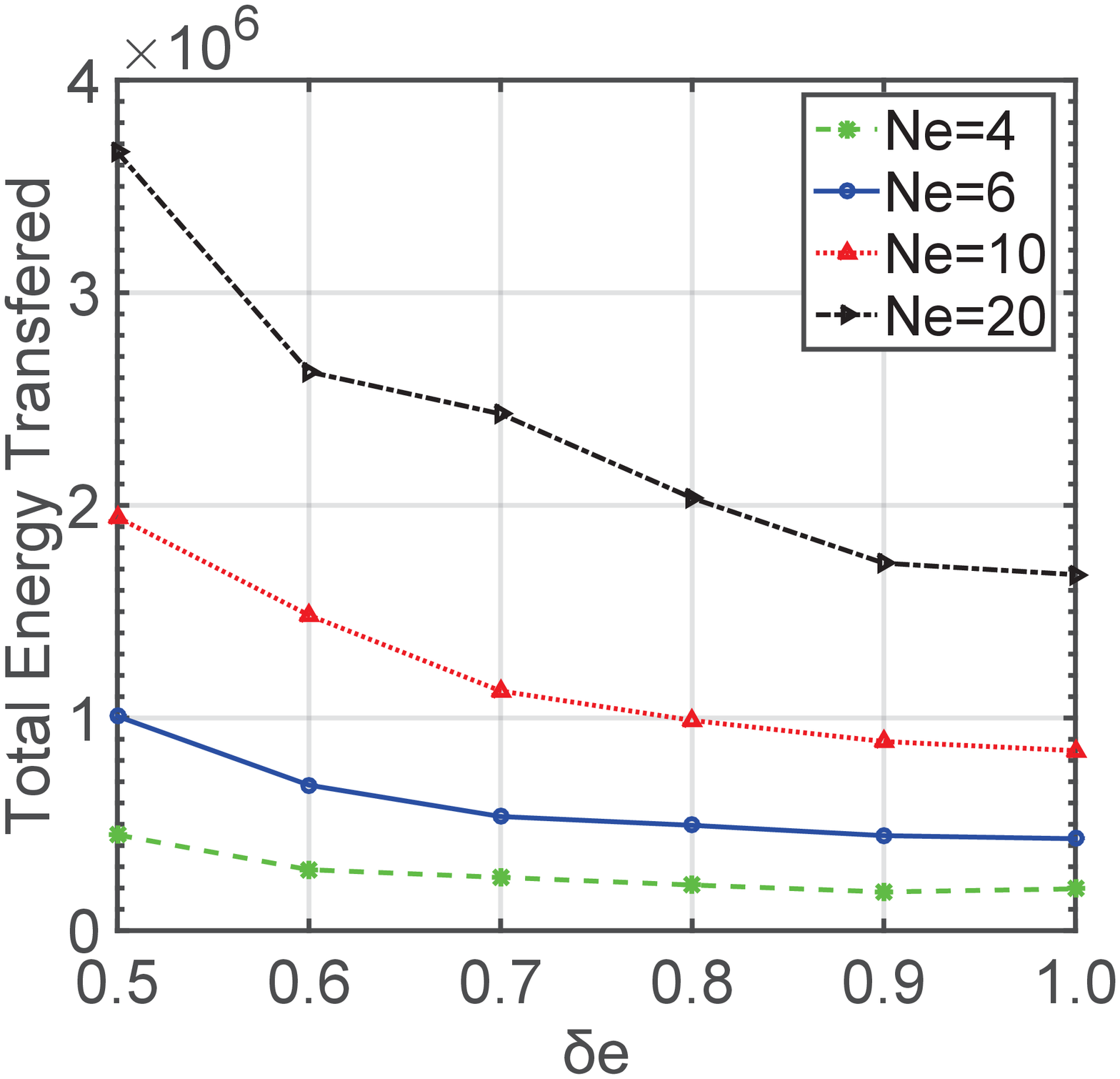}}
\end{minipage}
\hfill
\hspace{-.1cm}

\begin{minipage}[ht]{0.24\linewidth}
\centering
 \subfigure[J{\footnotesize AM}-S{\footnotesize AFE}-DistPower.]{
 \label{fig:subfig:d}
\includegraphics[width=4.43cm]{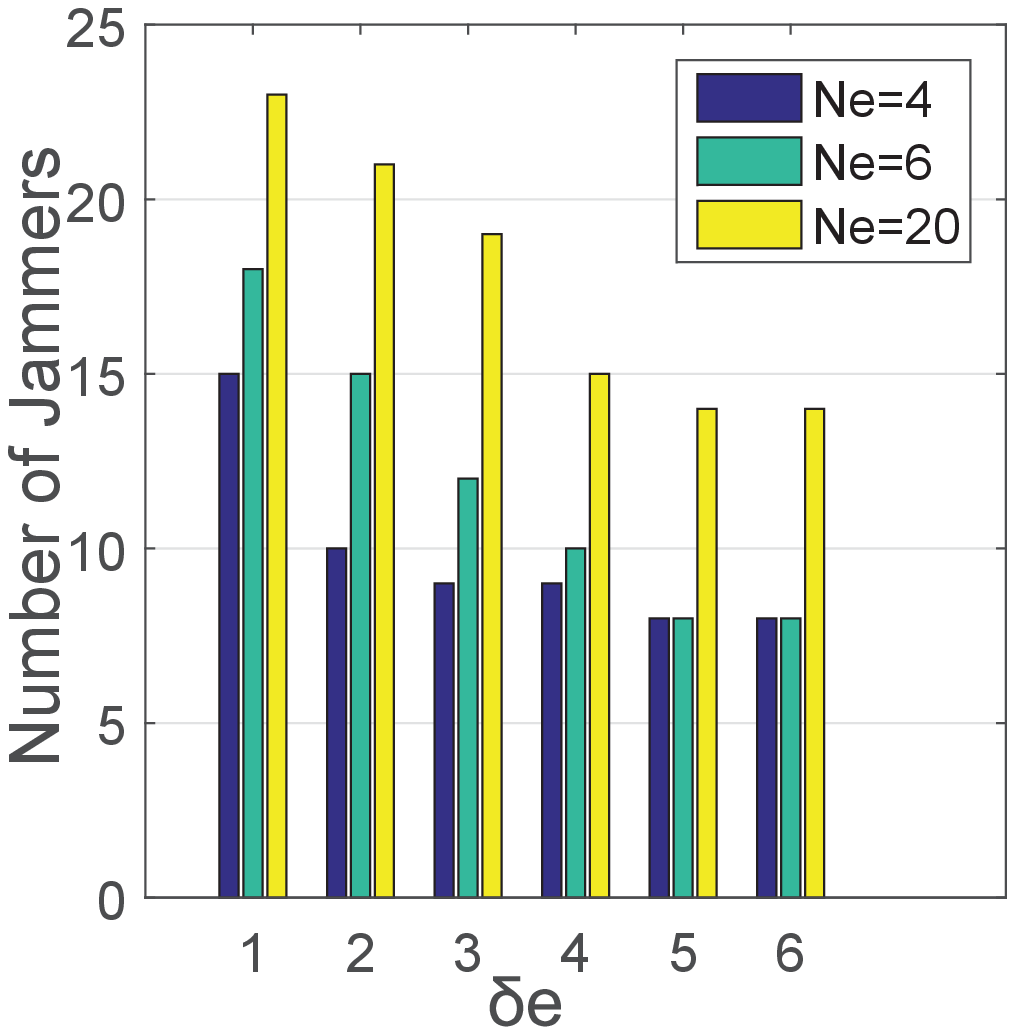}}
\end{minipage}
\hfill
\begin{minipage}[ht]{0.23\linewidth}
\centering
 \subfigure[J{\footnotesize AM}-L{\footnotesize IFE}-Max.]{
 \label{fig:subfig:e}
\includegraphics[width=4.6cm]{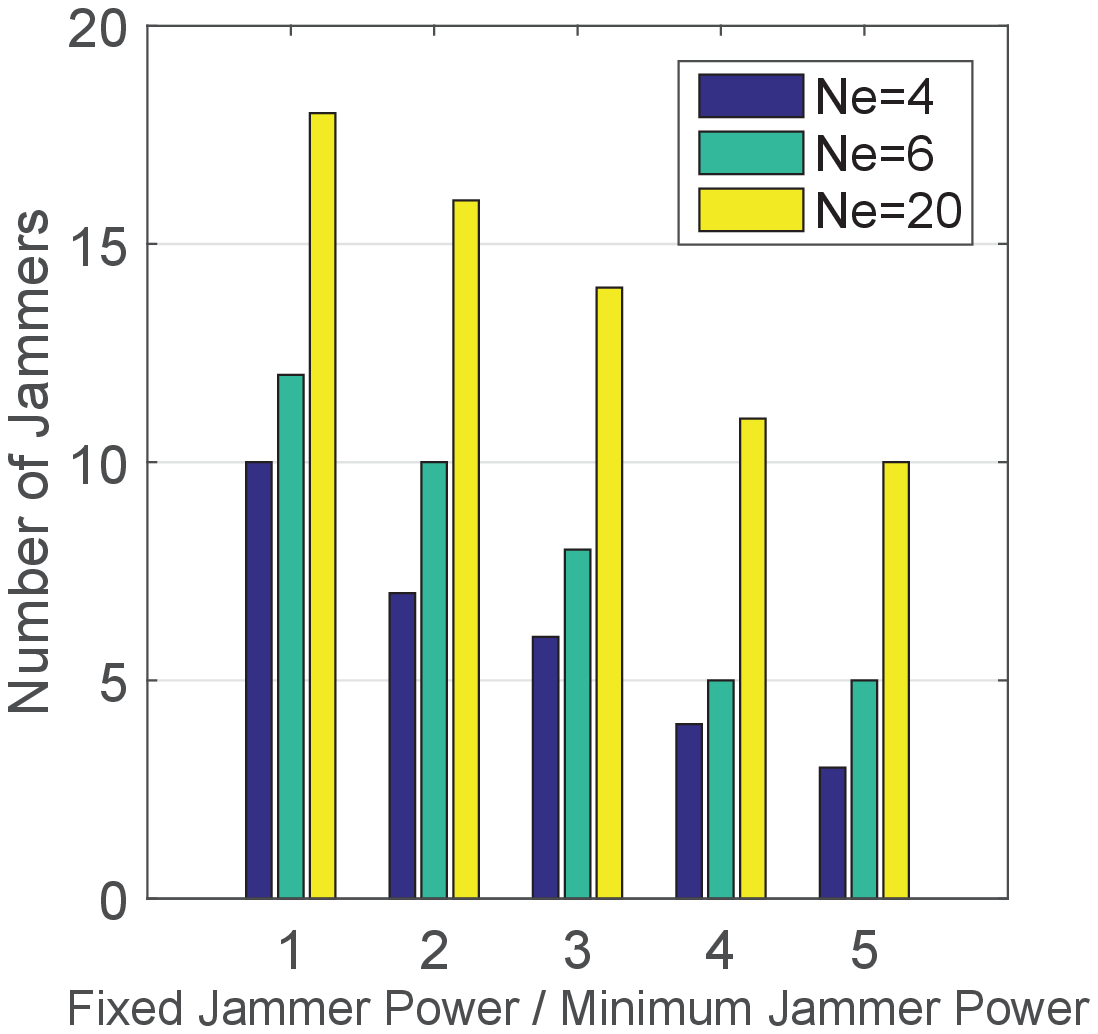}}
\end{minipage}
\hfill
\begin{minipage}[ht]{0.25\linewidth}
\centering
 \subfigure[J{\footnotesize AM}-S{\footnotesize AFE}D{\footnotesize IST}-MultiChannel.]{
 \label{fig:subfig:f}
\includegraphics[width=4.8cm]{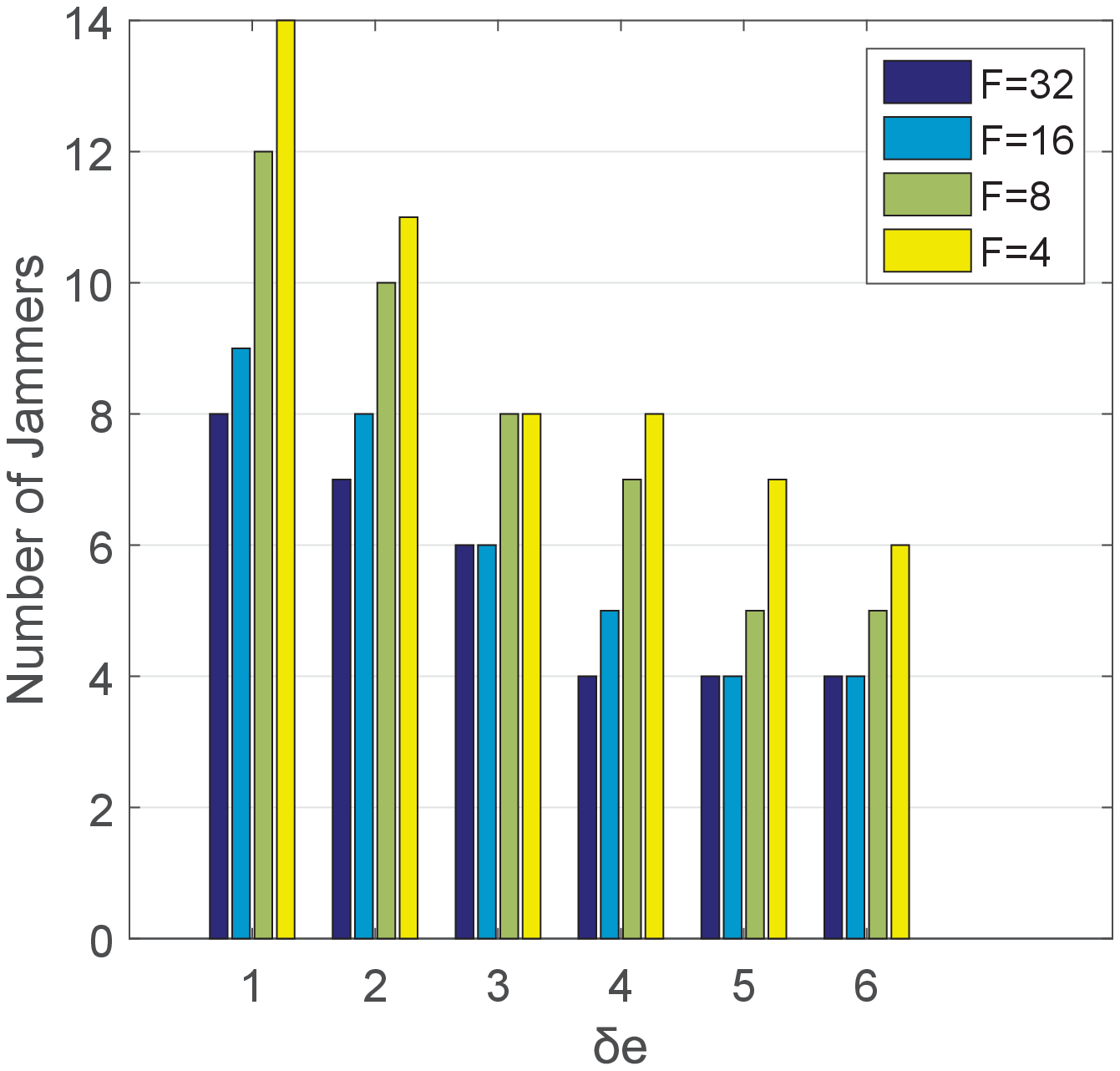}}
\end{minipage}%
\hfill
\begin{minipage}[ht]{0.25\linewidth}
\centering
 \subfigure[J{\footnotesize AM}-S{\footnotesize AFE}D{\footnotesize IST}-MultiChannel.]{
 \label{fig:subfig:f}
\includegraphics[width=4.53cm]{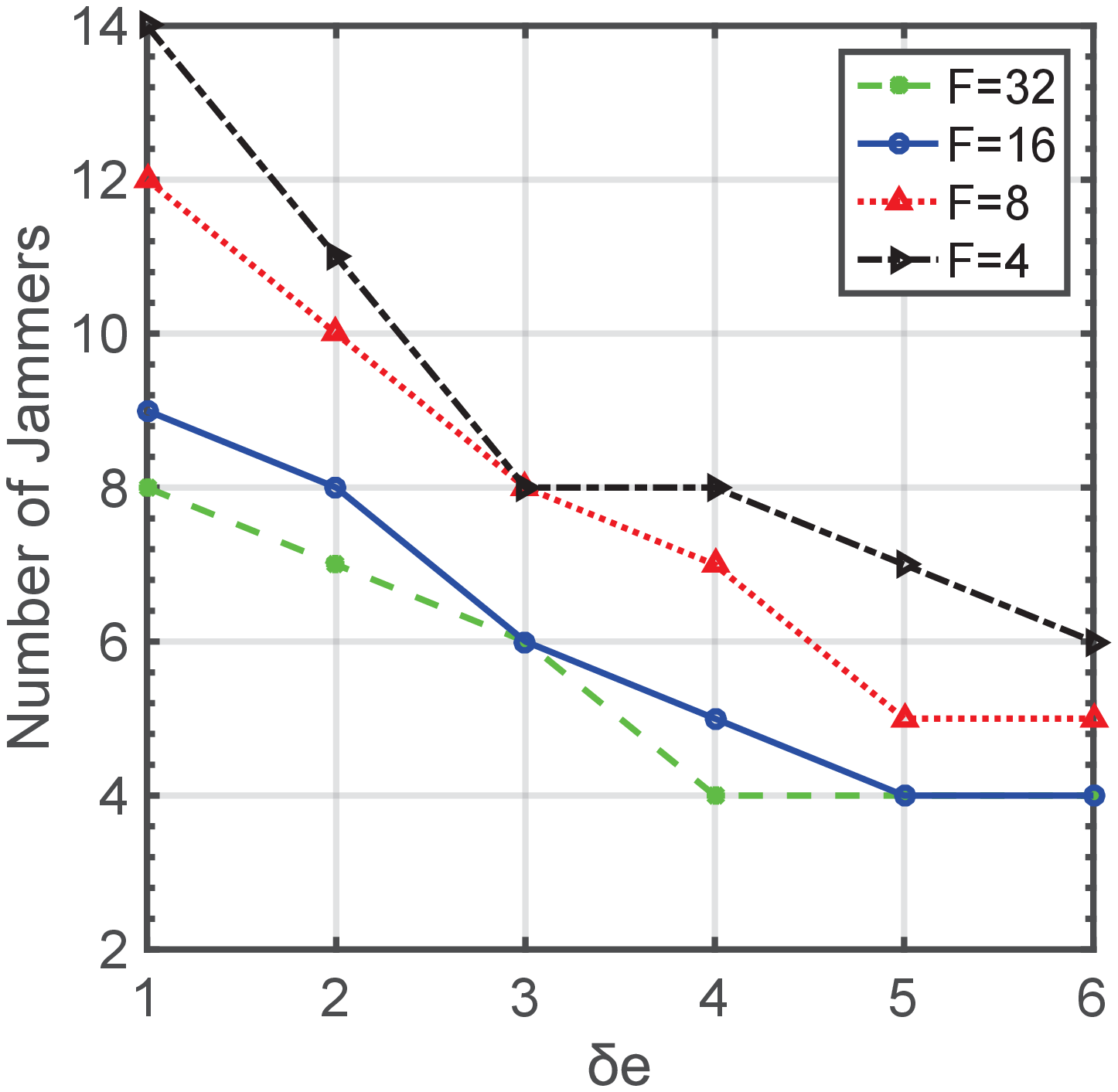}}
\end{minipage}%
\hfill
\caption{Results of simulations under proposed online algorithms.}
\end{figure*}

\section{Improved Competitive Ratios and EE under Spatial and Temporal Extensions }
\subsection{Jammer placement request  with duration}
\indent In the previous sections we assumed that requests last forever, analyzing only the spatial aspect of the problem. We now show how our results extend when each request $i$ has a duration $t_i$. After time $t_i$ an accepted request stops sending and leaves(thus, no longer causing interference).\\
\indent We show the modification for the algorithm S{\footnotesize AFE}-D{\footnotesize ISTANCE} for $r\in[0,1]$. We adapt the algorithm in the following way. It accepts a given request $i$ if and only if the safe distance $\sigma$ holds to all previously accepted requests that are active at some point time in $i$'$s$ duration.\\
\indent Our first observation is the following. If we consider a fixed point in time, an optimal solution OPT can have at most $O(\Delta^d)$ more requests than our algorithm, as this corresponds to the spatial problem. Now let $i$ be a request accepted by S{\footnotesize AFE}-D{\footnotesize ISTANCE} with smallest duration possible, that is, $t_i= 1$. Each request contained in an optimal solution that interferes with $i$ is active at least either when $i$ starts or when it stops sending. So it is sufficient to count the accepted requests in OPT at both of these points in time to upper bound the number of requests blocked by i, which is $2\cdot O(\Delta^d)$. Furthermore, a request $i$ with $t_i\leq\Gamma$ can be split into at most $\Gamma$ requests of duration 1, thus blocking at most $(\Gamma+1)\cdot O(\Delta^d)$ requests. The argumentation is similar for other polynomial power assignments and results in an additional factor of $\Gamma$ in all previously shown bounds.\\
\indent In the case of multiple channels, for $k = k'\cdot k''$, clustering of requests w.r.t. similar length and duration values can be used to improver the ratio for our algorithm S{\footnotesize AFE}-D{\footnotesize ISTANCE} to $O(k\cdot \Gamma^{1/k'}\Delta^{(d/2k'')+\epsilon})$. Choosing $k = log\Gamma\cdot log\Delta$, R{\footnotesize ANDOM}S{\footnotesize AFE}-D{\footnotesize ISTANCE} becomes $O(log\Gamma\cdot log\Delta0)$-competitive.

\subsection{Multiple Channels}
\indent In this section we show how to generalize the algorithms above to $k$ channels and decrease their competitive ratio. We propose a $k-channel adjustment$, in which we separate the problem by using certain channels only for specific request lengths. All requests with length in $[\Delta^{(i-1)/k},\Delta^{i/k}]$ are assigned to channel $i$, for $i$ = 1,...,k, where we assign requests of length $\Delta^{i/k}$ arbitrarily to channel $i$ or $i+1$. For each channel $i$ we apply an algorithm outlined above, which makes decisions about acceptance and power of requests assigned to channel $i$. Using this separation, we effectively reduce the aspect ratio to $\Delta^{1/k}$ on each channel. If the optimum solution has to adhere to the same length separation on the channels, this would yield a denominator $k$ in the exponents of $\Delta$ of the competitive ratios. Obviously, the optimum solution is not tied to our separation, but the possible improvement due to this degree of freedom can easily be bounded by a factor $k$. This yields the following corollary.

\begin{corollary}
M{\footnotesize ULTI}-C{\footnotesize LASS} S{\footnotesize AFE}-D{\footnotesize ISTANCE} with $k$-channel adjustment is $O(k\Delta^{(d/2k)+\epsilon})$-competitive using the square-root power assignment. S{\footnotesize AFE}-D{\footnotesize ISTANCE} with $k$-channel adjustment is $O(k\Delta^{d/k})$-competitive for any polynomial assignment with $r\in[0,1]$, and $O(k\Delta^{max\{r,1-r\}\cdot d/k})$-competitive for $r\notin[0,1]$.
\end{corollary}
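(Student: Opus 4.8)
The plan is to decompose the $k$-channel problem into $k$ independent single-channel instances, each having a drastically smaller distance ratio, apply the single-channel bounds of Theorem~\ref{theorem1} and Corollary~8 on each channel, and then pay exactly one factor of $k$ to compensate for the fact that the offline optimum is not forced to obey our length-based channel separation.

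First I would fix the $k$-channel adjustment as already described: for $i=1,\dots,k$ let class $C_i$ collect every request whose jamming distance lies in $[\Delta^{(i-1)/k},\Delta^{i/k}]$, route $C_i$ to channel $i$, and run the single-channel procedure independently on each channel (the M{\footnotesize ULTI}-C{\footnotesize LASS} variant for the square-root assignment, plain S{\footnotesize AFE}-D{\footnotesize ISTANCE} otherwise). The decisive observation is that on channel $i$ the ratio of the largest to the smallest admitted distance is at most $\Delta^{i/k}/\Delta^{(i-1)/k}=\Delta^{1/k}$, so the instance living on channel $i$ is a single-channel instance of effective distance ratio $\Delta^{1/k}$. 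I may therefore invoke the earlier results verbatim with $\Delta$ replaced by $\Delta^{1/k}$. Writing $\mathrm{ALG}_i$ for the number of requests admitted by our algorithm on channel $i$ and $\mathrm{OPT}_i^{(1)}$ for the optimum of the corresponding single-channel instance on the class-$i$ request stream, this yields $\mathrm{OPT}_i^{(1)}\le c_i\,\mathrm{ALG}_i$, with $c_i=O(\Delta^{(d/2k)+\epsilon})$ for the square-root assignment (the $\epsilon$ inherited from the approximate power assignment of Section~IV), $c_i=O(\Delta^{d/k})$ for polynomial $r\in[0,1]$, and $c_i=O(\Delta^{\max\{r,1-r\}\cdot d/k})$ for $r\notin[0,1]$.

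The heart of the argument, and the step I expect to be the main obstacle, is relating the genuinely unrestricted optimum $\mathrm{OPT}$—which may place requests of any length on any channel—to the partition-respecting quantities $\mathrm{OPT}_i^{(1)}$. I would decompose $\mathrm{OPT}=\sum_{i=1}^{k}|\mathrm{OPT}\cap C_i|$ by length class. Fixing a class $i$, the requests of $\mathrm{OPT}\cap C_i$ are distributed over the $k$ available channels, and those lying on any single channel form a feasible single-channel configuration. Feasibility survives this restriction because a single admitted jammer already suffices to thwart its own targeted eavesdropper (the worst-case fact used throughout the single-channel analysis), so discarding the remaining jammers neither exposes any eavesdropper still being jammed nor—since it only lowers the accumulated interference—endangers any legitimate receiver. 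Hence the number of class-$i$ requests $\mathrm{OPT}$ places on any one channel is at most $\mathrm{OPT}_i^{(1)}$, and summing over the $k$ channels gives $|\mathrm{OPT}\cap C_i|\le k\,\mathrm{OPT}_i^{(1)}$.

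Finally I would assemble the pieces, using that $\mathrm{ALG}=\sum_i\mathrm{ALG}_i$ counts disjoint per-channel admissions:
$$\mathrm{OPT}=\sum_{i=1}^{k}|\mathrm{OPT}\cap C_i|\le k\sum_{i=1}^{k}\mathrm{OPT}_i^{(1)}\le k\sum_{i=1}^{k}c_i\,\mathrm{ALG}_i\le k\Big(\max_i c_i\Big)\sum_{i=1}^{k}\mathrm{ALG}_i=k\Big(\max_i c_i\Big)\,\mathrm{ALG}.$$
Substituting the three values of $\max_i c_i$ from the second step gives competitive ratios $O(k\Delta^{(d/2k)+\epsilon})$, $O(k\Delta^{d/k})$, and $O(k\Delta^{\max\{r,1-r\}\cdot d/k})$ respectively, which is exactly the claimed statement. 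Beyond the feasibility-under-restriction check, the only delicate point is confirming that the extra freedom the optimum gains by ignoring our separation costs precisely the single factor $k$ and no more, and this is certified by the per-class, per-channel counting above.
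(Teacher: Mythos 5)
Your proposal is correct and follows essentially the same route as the paper: partition requests by length into $k$ classes so each channel sees an effective distance ratio of $\Delta^{1/k}$, apply the single-channel bounds per channel, and charge one extra factor of $k$ for the optimum's freedom to ignore the length separation. The paper merely asserts that this last factor "can easily be bounded by $k$," whereas you make it explicit via the per-class, per-channel counting and the observation that a subset of a feasible jammer set remains feasible; this is a welcome elaboration, not a different argument.
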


\section{Simulations}

\indent We conducted preliminary experiments to compare the different numbers of the eavesdroppers. The setting we have chosen is the storage/fence shown in Figure 5. The fence is of dimensions $500\times 300$ units and we placed a grid of $1\times 1$ cells in the entire region. We simulated both J{\footnotesize AM}-S{\footnotesize AFE}D{\footnotesize IST}-P{\footnotesize OWER} and J{\footnotesize AM}-L{\footnotesize IFE}M{\footnotesize AX} in this setting. For the power assignment from J{\footnotesize AM}-S{\footnotesize AFE}D{\footnotesize IST}-P{\footnotesize OWER}, we investigated the difference in number of jammers. Finally, we observed the variation in total power assigned with $\epsilon $ and $\delta$ and the number of jammers palced with $\epsilon$, $\delta$ and $\hat{P}$. We set the round number T is 300. Give the energy conversion rate $\alpha(v)= 0.4$ and the average distance $\bar{l}=4$ as the distance $l\in[1,10]$. Therefore we can conclude the energy attenuation factor $\bar{l}^{\gamma}=16$ when the $\gamma = 4$. We chosen the following values: (i)$\epsilon = \{0.1, 0.2,0.3,0.4,0.5\}$,(ii)$\delta =  \{0.5, 0.6,...,1.0\}$, (iii)$\hat{P} = \{(1/\epsilon), (2/\epsilon),..., (5/\epsilon)\}$. In both numbers of the eavesdroppers, we removed all grid points which were in the forbidden region. \\
\indent For J{\footnotesize AM}-S{\footnotesize AFE}D{\footnotesize IST}-P{\footnotesize OWER}, the decline is more steep than the J{\footnotesize AM}-L{\footnotesize IFE}M{\footnotesize AX} because of the 300 rounds cause the J{\footnotesize AM}-L{\footnotesize IFE}M{\footnotesize AX} data is obtained by repeated average. The J{\footnotesize AM}-S{\footnotesize AFE}-DistPower and J{\footnotesize AM}-L{\footnotesize IFE}-Max give the same information about the desired jammers become more and more as the eavesdroppers goes up. The last two figures about the J{\footnotesize AM}-S{\footnotesize AFE}D{\footnotesize IST}-MultiChannel talk us the multichannel give the much less desired jammers but the benefits of multichannel will reduce as the channel grows in number in two different ways based on the $N_e = 6$.

\section{Conclusion}
In this paper, we propose the first the first distributed
protocol that provides secure communication in any geographically restricted communication networks using
energy-constrained friendly jammers wirelessly powered by legitimate transmitters as energy sources. Online learning
algorithms are proposed to maximize the lifetime of jammer and met the goal of EE with heterogenous
energy demands.  Our protocol supports dynamic behaviors,
e.g., mobility, eavesdropping (communicating) completion or
addition/removal of nodes, as along the secure communication
are restricted to the storage. However, our proposed protocol
is adaptive to the situations such information is available, e.g.,
exact positions and frequency of both legitimate communications and eavesdropping behaviors, and foreseen further EE
improvements and reduced number of jammers. We provided competitive ratios for approximate algorithms in several distributed
settings, and found the multi-channel diversity is a good approach to improve the security of wireless communications.

\end{document}